\documentclass[journal]{IEEEtran}
\usepackage{amsmath,amsfonts}
\usepackage{algorithmic}
\usepackage{array}
\usepackage{xcolor}
\usepackage{colortbl}

\usepackage{textcomp}
\usepackage{stfloats}
\usepackage{url}
\usepackage{verbatim}
\usepackage{graphicx}
\def\BibTeX{{\rm B\kern-.05em{\sc i\kern-.025em b}\kern-.08em
    T\kern-.1667em\lower.7ex\hbox{E}\kern-.125emX}}
\usepackage{balance}
\usepackage{verbatim}
\usepackage{amsfonts}
\usepackage{amssymb}
\usepackage{stfloats}
\usepackage{cite}
\usepackage{soul}
\usepackage{setspace}
\usepackage{graphicx}
\usepackage{psfrag}
\usepackage{amsmath}
\usepackage{array}
\usepackage{epstopdf}
\usepackage{authblk}
\usepackage{graphicx} 
\usepackage{amsthm} 
\usepackage{lipsum}
\usepackage{verbatim} 
\usepackage{authblk}
\usepackage{mathtools}
\usepackage{cuted}
\usepackage[lined,boxed,ruled]{algorithm2e}
\usepackage{booktabs}
\usepackage{subcaption} 
\usepackage[font=small]{caption} 

\usepackage{ifthen}

\newtheorem{Definition}{Definition}

\newtheorem{Lemma}{Lemma}
\newtheorem{Theorem}{Theorem}

\newtheorem{Proposition}{Proposition}
\newtheorem{Remark}{Remark}

\begin{document}



\title{SpaceMoE: Realizing Distributed Mixture-of-Experts Inference over Space Networks}

\author{{Zhanwei~Wang,~Huiling~Yang,~Min~Sheng,~\IEEEmembership{Fellow,~IEEE},\\
Khaled~B. Letaief,~\IEEEmembership{Fellow,~IEEE},~and~Kaibin~Huang,~\IEEEmembership{Fellow,~IEEE}
\vspace{-5mm}
}
\thanks{Z.~Wang, H.~Yang, and K. Huang are with the Department of Electrical and Computer Engineering, The University of Hong Kong (HKU), Hong Kong SAR, China (Email: \{zhanweiw, hlyang,  huangkb\}@eee.hku.hk). Corresponding author: K. Huang.

M. Sheng is with the State Key Laboratory of Integrated Service Networks, Institute of Information Science, Xidian
University (XDU), Xi’an, Shaanxi, China (Email:
msheng@mail.xidian.edu.cn).

Khaled B. Letaief is with the Department of Electronic and Computer Engineering,  The  Hong  Kong  University of  Science and  Technology  (HKUST), Hong Kong SAR,
China (Email: eekhaled@ust.hk).\\
}
  }


\maketitle
\vspace{-10mm}
\begin{abstract}
Leveraging continuous solar energy harvesting at high efficiency, space data centers are envisioned as a promising platform for executing energy-intensive large language models (LLMs). Recognizing this advantage, space and AI conglomerates (e.g., SpaceX, Google) are actively investing in this vision. 
One key challenge, however, is the efficient distributed deployment of a large-scale LLM in a satellite network due to the limited onboard computing and communication resources. This gives rise to a placement problem that involves partitioning and mapping model components to satellites such that the fundamentally different model architecture and network topology can be reconciled to ensure low-latency token generation. 
To address this problem, we present the Space Network of Mixture-of-Experts (SpaceMoE) framework targeting the distributed execution of a popular mixture-of-experts (MoE) model in space.
The proposed placement strategies are two-level: (1) layer placement, which assigns MoE layers to satellite subnets; and (2) intra-layer expert placement, which assigns individual experts to satellites associated with the same layer/subnet. For layer placement, we exploit the ring-like communication pattern of autoregressive inference to partition the satellite constellation along the orbiting direction into subnets arranged on a ring, each hosting one MoE layer. Based on this architecture, we formulate and solve an optimization problem for intra-layer expert placement to map experts with heterogeneous activation probabilities onto satellites. 
The derived strategy reveals an intuitive principle: a frequently activated expert should be mapped to a satellite on a routing path with low expected latency. Experiments over a thousand-satellite constellation show that SpaceMoE achieves at least a threefold latency reduction compared with conventional random and ablation-based placement strategies.

\end{abstract}

\begin{IEEEkeywords}
Satellite Networks,  Large Language Models, Mixtures of Experts,  and Expert Placement.
\end{IEEEkeywords}

%

%

 \section{Introduction}

 The \emph{sixth-generation} (6G) mobile networks are envisioned to support ubiquitous intelligence not only at the network edge but also in space~\cite{GX-CM-2020,kuang2025space}. This emerging paradigm of space AI is driven by the unique advantage of continuous solar-energy harvesting in space, which can achieve much higher efficiency than on the ground. This makes space platforms promising not only for hosting energy-intensive \emph{large language models} (LLMs), but also for delivering AI services to underserved regions outside the coverage of terrestrial networks~\cite{shi2025satellite, CHEN2025}. This vision has attracted growing interest and substantial investment from major ICT companies such as NVIDIA, SpaceX, and Google~\cite{Lee2025Starcloud,SunCatcher}. However, the limited onboard computing resources of a single satellite make it impractical to run an LLM, which often comprises tens to hundreds of billions of parameters, on one node alone. 
 This challenge gives rise to the problem of \emph{model placement}, namely, how to map model components onto satellites so as to maximize inference efficiency. The problem is further complicated by mobility-induced variations in network topology and the harsh space environment, both of which may disrupt laser \emph{inter-satellite links} (ISLs).
 To address these challenges, we investigate the deployment of the widely adopted \emph{Mixture-of-Experts} (MoE) model over a satellite network and propose a framework termed \emph{Space Network of MoE} (SpaceMoE).
 The framework optimally places MoE subnetworks, referred to as attention blocks and experts, across satellites with the objective of minimizing \emph{end-to-end} (E2E) communication-and-computation latency for token generation.

On one hand, a computing network consists of interconnected computing nodes (e.g., GPU servers) linked through a communication topology, where nodes may have heterogeneous computing capabilities and link bandwidths~\cite{PetersonDavie2007,GerlaKleinrock1977}. On the other hand, an AI algorithm is characterized by its model architecture, including learnable parameters, computational operators, and the associated data-flow dependencies~\cite{Sze2017EfficientDNN}. The distributed deployment of such an algorithm over a large-scale computing network typically relies on parallelization strategies that partition the workload across multiple nodes to enable concurrent execution and thereby reduce E2E training or inference latency~\cite{JiaZahariaAiken2019Beyond}. However, parallel execution introduces stringent computation and communication constraints. In particular, each model partition must fit within the resource limits of its assigned node, while high-dimensional intermediate states, such as activations or token representations, must be exchanged over bandwidth-limited communication links. These constraints necessitate the careful design of a network-wide mapping of model components and their dependencies onto physical nodes and links, which lies at the core of the placement problem. Mathematically, the challenge is to reconcile two fundamentally different graph structures, namely, the physical network topology and the model dependency graph, which are often highly mismatched. This graph mismatch has motivated extensive studies on AI placement in data centers and edge networks. In contrast, this work addresses the problem in the emerging context of space AI.

In data-center networks with reliable high-rate wired links, model placement is typically studied under scalable clustered architectures, in which servers within a cluster are interconnected by high-speed intra-cluster links, while different clusters communicate through more limited inter-cluster links~\cite{alfares2008fattree,kim2008dragonfly,DeSensi2024GPU2GPU}. When deploying MoE models over such networks, parallel expert inference requires frequent aggregation of intermediate results across distributed nodes, often over inter-cluster links, thereby creating a communication bottleneck.
To mitigate this issue, existing studies colocate frequently interacting model layers or experts within the same cluster to reduce cross-switch traffic~\cite{go2025moetuner, li2024optimizing}, formulate integer linear programming problems to minimize cross-GPU and multi-hop traffic~\cite{sivtsov2025cluster}, or reduce model dispersion across the network~\cite{he2025efficient}, all with the common goal of improving E2E training performance or inference latency. However, these approaches are not directly applicable to space networks for several reasons. 
First, unlike GPU servers in data centers that are separated by several to tens of meters and connected by high-capacity cables, satellites may be separated by hundreds of kilometers and communicate through laser inter-satellite links that are subject to disruption. Second, each satellite operates under much tighter onboard computation and memory constraints than its terrestrial counterpart, making it far more difficult to colocate multiple experts on a single node.
Last, placement strategies developed for static data-center topologies are ill-suited to space networks as satellite mobility and link disruptions continuously reshape the connectivity graph.

The model placement problem also arises in wireless edge networks. The networks are typically characterized by a star topology consisting of distributed mobile devices connected to an edge server via one-hop links~\cite{yang2026optimal,wang2024spectrum}. For such networks, the placement problem largely concerns the optimal partitioning of model parameters and computational tasks across the device–server boundary. 
The problem has been addressed in two directions. 
First,  \emph{split inference} partitions large-scale AI models such that devices execute initial layers for feature extraction while the server completes the inference~\cite{ZJ-CoM-2020}. The research on the topic focuses on the dynamic control of splitting points to mitigate the coupled communication-computation bottlenecks inherent in resource-constrained devices~\cite{ZW2024ultra-LoLa,wang2025revisiting,wang2025airbreath}. 
An emerging second direction explores MoE placement by distributing experts across edge nodes~\cite{xue2025wdmoe,chen2025slimcaching}. A typical architecture maintains attention and gating mechanisms at the server while offloading task-specific experts to mobile devices to leverage distributed computation resources. Researchers have made attempts to minimize the E2E inference latency through the joint optimization of device selection and uplink resource allocation~\cite{xue2025wdmoe}. 
Another strategy involves caching high-activation experts across multiple edge servers while keeping other components in the cloud to reduce expected latency~\cite{chen2025slimcaching}. In view of prior work,  existing strategies target relatively simple, single-server systems while their focus is to address the communication bottleneck and resource constraints of devices. There remains a significant gap in addressing the complexities of placement in more sophisticated networks with large-scale, dynamic topologies like satellite networks.

As opposed to its terrestrial counterparts, the unique challenges faced by the deployment of AI algorithms in space and relevant studies to tackle them are summarized as follows. Primarily, satellites possess significantly constrained onboard memory and computational power compared to terrestrial GPU-accelerated systems ~\cite{Saridakis_Aero_2016}. Furthermore, the space network topology is inherently time-varying as driven by high orbital mobility and susceptibility to link outages induced by space weather~\cite{Ishii2024_SpaceWeatherComm,Miteva2023_SatEffects}. Compounding these issues is the fact that inter-satellite propagation latency can be several orders of magnitude higher than that of ground-based nodes. These distinctive characteristics make it difficult to directly apply existing edge-network placement strategies to the space domain.
Consequently, recent efforts have focused on tailoring AI algorithms and architectures specifically for space–ground integrated networks. 
Researchers explore model splitting strategies for distributed deployment across orbital and ground-based network segments~\cite{fan2025TMC,chen2025sliceTSC,yao2024leoedgejsac}. These studies have addressed issues such as matching 
heterogeneous hardware constraints~\cite{fan2025TMC} and optimizing the accuracy–latency trade-off via \emph{deep-neural-network} (DNN) layer placement to enhance inference energy efficiency~\cite{chen2025sliceTSC,yao2024leoedgejsac}. 
However, while conventional DNN capacity typically scales along a single dimension such as depth, the MoE architecture introduces a second dimension of complexity by activating a dynamic subset of parallel experts within each layer. Despite recent progress in split-inference, the optimal network-level mapping of this resultant two-dimensional expert-dependency graph onto a highly dynamic satellite topology remains an uncharted but critical frontier for realizing the full potential of space AI.

To address this challenge, this paper considers the distributed deployment of a large-scale MoE model in a \emph{low-Earth-orbit} (LEO) satellite network. We present the SpaceMoE framework to optimally map MoE subnetworks to satellites with the objective of minimizing token-generation latency. Given the resource constraints of individual satellites, we consider that each satellite can host only a single MoE subnetwork, i.e., one expert or one gateway. Developed under this configuration, the proposed framework consists of the SpaceMoE architecture design and an optimal expert placement strategy. The main contributions and findings of this work are summarized as follows.

\begin{itemize}
\item \textbf{SpaceMoE Architecture Design:}
The architecture design reconciles the fundamental differences between the MoE architecture and the time-varying satellite network topology. The MoE model consists of two main types of subnetworks. Specifically, an expert is a layer-specific \emph{feed forward network} (FFN) that encodes specialized, domain-specific knowledge, while a gateway (also referred to as a router) produces contextual decisions that dynamically route tokens to appropriate experts across different MoE layers. 
The proposed SpaceMoE architecture follows a hierarchical design comprising two levels: \emph{layer-level placement} and \emph{intra-layer expert placement}, both aimed at minimizing the E2E token-generation latency. First, the layer-level placement partitions the satellite network into multiple subnets along the intra-orbit (ring-like) direction; each subnet hosts one MoE layer that consists of a centrally located gateway and its associated experts. This strategy exploits the circular topology of the LEO orbit to create a low-latency ring-based pipeline in space, thereby facilitating the autoregressive token generation process. Specifically, it enables the token generated by the last satellite subnet (or MoE layer) to be fed directly into the first layer subnet. Second, the intra-layer expert placement optimizes the mapping of experts to satellites within each subnet, as detailed below.


\item \textbf{Optimal Expert Placement for SpaceMoE:}
 Consider an arbitrary MoE layer (or subnet) in the preceding network architecture. The placement strategy optimizes the expert-to-satellite mapping within the current layer to minimize the E2E expected token routing latency. This optimization problem, however, lacks tractability due to its high complexity, which arises from factors such as the heterogeneity of experts' activation probabilities and the need for shortest path routing over a time-varying network topology. 
To achieve tractability, we introduce an E2E objective function, termed layer computation latency, defined as the expected time taken to complete computation and propagation through the current layer under a given intra-layer expert-to-satellite mapping. Furthermore, we define the expected path latency of a satellite as the conditional expected token-generation latency given that the satellite is activated. 
 Under these definitions, we prove that the optimal mapping is achieved by arranging experts in ascending order of their activation probabilities and satellites in descending order of their expected path latencies, then mapping the experts to the satellites following these respective orders. Although the proof is nontrivial, the resulting optimal strategy aligns with the intuitive principle that a frequently activated expert should be placed on a satellite with low expected path latency.


\item \textbf{Experiments:}
Experiments are conducted on a thousand-satellite polar constellation with time-varying laser ISLs, using the LLaMA-MoE-3.5B model over eight language-understanding datasets. The results demonstrate the superiority of the proposed scheme over several benchmarking schemes with either random placements or partial optimization. Our results also quantify the effects of key space parameters on E2E token-generation latency.

\end{itemize}

The remainder unfolds as follows. We characterize the space-network model in Sec.~\ref{sec:space_model} and introduce the MoE preliminaries in Sec.~\ref{sec:preliminaries_moe}. Sec.~\ref{sec:SNAKE_architecture} presents the proposed SpaceMoE architecture and its two-level MoE placement framework, followed by the expert placement design in Sec.~\ref{sec:SNAKE_Expert_placement}. Discussion and extensions are provided in Sec.~\ref{sec:extensions}. Sec.~\ref{sec:Experiments} reports the experimental results, and Sec.~\ref{sec:conclusion} concludes the paper.

\section{Modeling Space Networks}
\label{sec:space_model}


\subsection{Constellation Model}

As illustrated in Fig. \ref{fig:network_model}, we consider a polar LEO constellation consisting of $N_x$ orbital planes, each containing $N_y$ satellites.
The orbital planes span the west--east direction of the globe, while satellites within each orbit move from south to north and then return southward.
To represent satellite locations, we label the $y$-th satellite in the $x$-th orbital plane by the coordinate $(x,y)$, and define the constellation as the satellite set:
\begin{equation}
\mathcal{V}=\{(x,y)\mid x\in\{0,\dots,N_x\!-\!1\},\, y\in\{0,\dots,N_y\!-\!1\}\},
\end{equation}
with $|\mathcal{V}|=N_xN_y$.
Similar to the Starlink system, a seam exists between two adjacent counter-rotating orbits (see Fig.~\ref{fig:network_model}), which divides the constellation into two hemispheres~\cite{FCC_Starlink_Gen2_2022}.

\begin{figure}[t]
  \centering
\includegraphics[width=0.5\textwidth]{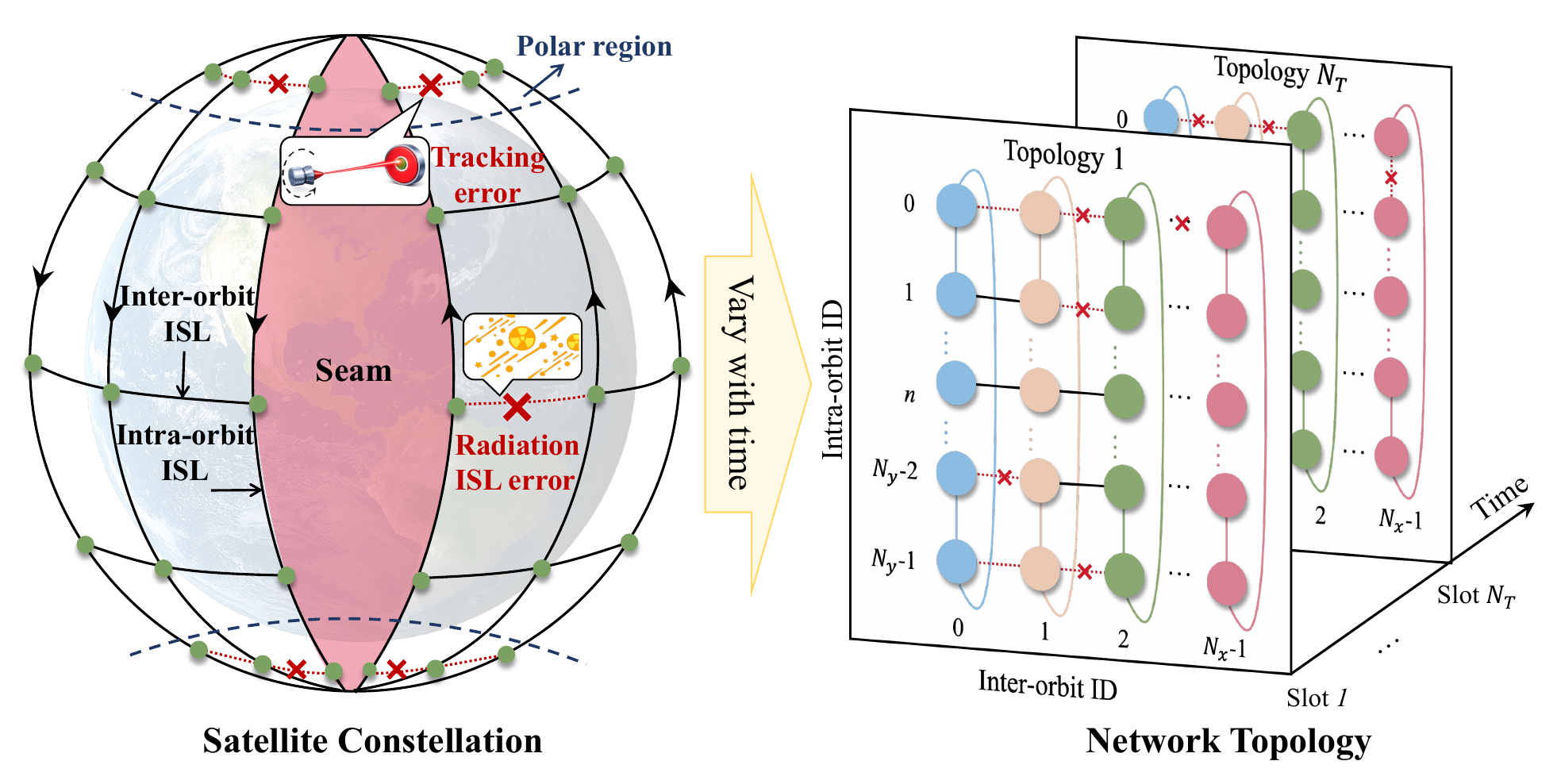}
\caption{Satellite constellation with time-varying network topologies.}
  \label{fig:network_model}
  \vspace{-5mm}
\end{figure}

\subsection{Model of Time-varying Network Topology}

The topology time variation is caused by the dynamics of the laser ISLs, as their availability is jointly shaped by satellite mobility and the geographical space environment. 
Specifically, optical ISLs with narrow beamwidths rely on stable \emph{pointing-acquisition-tracking} (PAT), so successful tracking is highly sensitive to the inter-satellite relative motion.
Space radiation is another key factor resulting in region-specific outages.
To model the time-varying topology, we represent the space network as a sequence of time-evolving graphs over discrete time slots, denoted by
$\mathcal{G}=\{\mathcal{G}(n)| n=1,\dots,N_T\}$.
For slot $n$, the network topology is a static undirected graph $\mathcal{G}(n)=\{\mathcal{V},\mathcal{E}(n)\}$,
where the edge set $\mathcal{E}(n)=\{E_{u,v}(n)=1| u,v\in \mathcal{V}\}$ contains all feasible ISLs in slot $n$, and $E_{u,v}(n)\in\{0,1\}$ denotes the random indicator of ISL feasibility between satellites $(u,v)$. Then, the effects of space characteristics on the ISL feasibility are detailed as follows.
We consider that each satellite maintains up to four duplex ISLs with its adjacent neighbors, comprising two intra-orbit ISLs and two inter-orbit ISLs, as shown in Fig. \ref{fig:network_model}.

Given two adjacent satellites $u=(x_1,y_1)$ and $v=(x_2,y_2)$ (with $|x_1 - x_2| + |y_1-y_2| =1$), ISL feasibility $E_{u,v}(n)$ can be modeled as~\cite{liu2017capacity}
\begin{equation}
    E_{u,v}(n)=
    \begin{cases}
\xi_{u,v}(n), & \dot{\theta}_{u,v}(n)\leq \dot{\theta}_{\delta},\\
    0, & \text{otherwise},
    \end{cases}
\end{equation}
where $\dot{\theta}_{u,v}(n)\leq \dot{\theta}_{\delta}$ indicates the event that the angular-rate difference between two satellites, i.e.,  $\dot{\theta}_{u,v}(n)$, is below a tracking-capability threshold $\dot{\theta}_{\delta}$.
On the other,
$\xi_{u,v}(n)$ represents a Bernoulli random variable indicating the space-radiation survival:
\begin{equation}
\label{eq:link_exist}
   \xi_{u,v}(n)=
   \begin{cases}
       1, &\text{with } P^{\sf sw}_{u,v}(n),\\
       0, & \text{otherwise},
   \end{cases}
\end{equation}
where  $P^{\sf sw}_{u,v}(n)\in[0,1]$ denotes the survival probability under space-environment effects.

\subsection{Token Communication Model}

Token exchange in SpaceMoE requires either single-hop transmission or 
multi-hop routing over the preceding network topology $\mathcal{G}$.
Corresponding latency models are discussed as follows.

\subsubsection{Single-hop communication latency}
Consider single-hop communication between two adjacent satellites, $u,v\in\mathcal{V}$, in slot $n$. The communication latency, denoted as $\hat T_{u,v}(n)$, is given by
\begin{equation}
\label{eq:per-hop}
\hat T_{u,v}(n) = T^{\sf pr}_{u,v}(n) + T^{\sf tx}_{u,v}(n),
\end{equation}
where $T^{\sf pr}_{u,v}(n)$ and $T^{\sf tx}_{u,v}(n)$ denote latencies of propagation and transmission, respectively. 
In particular, the propagation latency is
\begin{equation}
\label{eq:propagation}
T^{\sf pr}_{u,v}(n) = \frac{2(H+R_{\sf E}) \sin\left(\frac{\theta_{u,v}(n)}{2}\right)}{c},
\end{equation}
where $\theta_{u,v}(n)$ denotes the central angle between the two satellites, $H$  the orbital altitude,  $R_{\sf E }$ the Earth mean radius, and $c$ the speed of light.
The transmission latency to send a token of $M$-dimensional token embedding with $Q_B$-bit quantization over the ISL is
\begin{equation}
T^{\sf tx}_{(u,v)} = \frac{M Q_B}{R_{u,v}},
\end{equation}
where $R_{u,v}$ represents the transmission rate of the  ISL between satellites $u$ and $v$.

\subsubsection{Multi-hop routing latency}
\label{sec:multi-hop_routing}
We characterize the token routing latency on $\mathcal{G}(n)$ by a \emph{distance matrix}
$\mathbf{D}(n) \in \mathbb{R}^{|\mathcal{V}| \times |\mathcal{V}| }$, whose $(u,v)$-th entry $D_{u,v}(n)$ denotes the
shortest-path routing latency from source satellite $u$ to destination satellite $v$, as computed via Dijkstra’s algorithm~\cite{WilsonGraphTheory}.
Let $\mathcal{P}_{u \to v}(n)$ denote the set of all paths from $u$ to $v$ in $\mathcal{G}(n)$, and
let a generic path $p \in \mathcal{P}_{u \to v}(n)$ be written as
$p = (i_0, i_1, \dots, i_{H_p})$ with $i_0 = u$ and $i_{H_p} = v$.
Using the per-hop latency $\hat T_{i_{h-1}, i_h}(n)$ defined in \eqref{eq:per-hop}, 
the E2E routing latency along path $p$ is $\sum_{h=1}^{H_p} \hat T_{i_{h-1}, i_h}(n)$, and the shortest-path routing latency is
\begin{equation}
\label{eq:Du_v_def}
D_{u,v}(n)
= \min_{p \in \mathcal{P}_{u \to v}(n)}
  \sum_{h=1}^{H_p} \hat T_{i_{h-1}, i_h}(n),
\end{equation}
with $D_{u,u}(n) = 0$ for all $u$.

\section{Preliminary: Mixture-of-Expert Model} 
\label{sec:preliminaries_moe}

The MoE architecture and inference process are illustrated in Fig.~\ref{fig:MoE_diag}. 
Essentially, the model generates tokens autoregressively using the $L$ stack layers. 
Its detailed operations and key features are described as follows.

\begin{figure}[t!]
\centering
  \begin{subfigure}[t]{0.38\textwidth}
  \vspace{2mm}
\includegraphics[width=\linewidth]{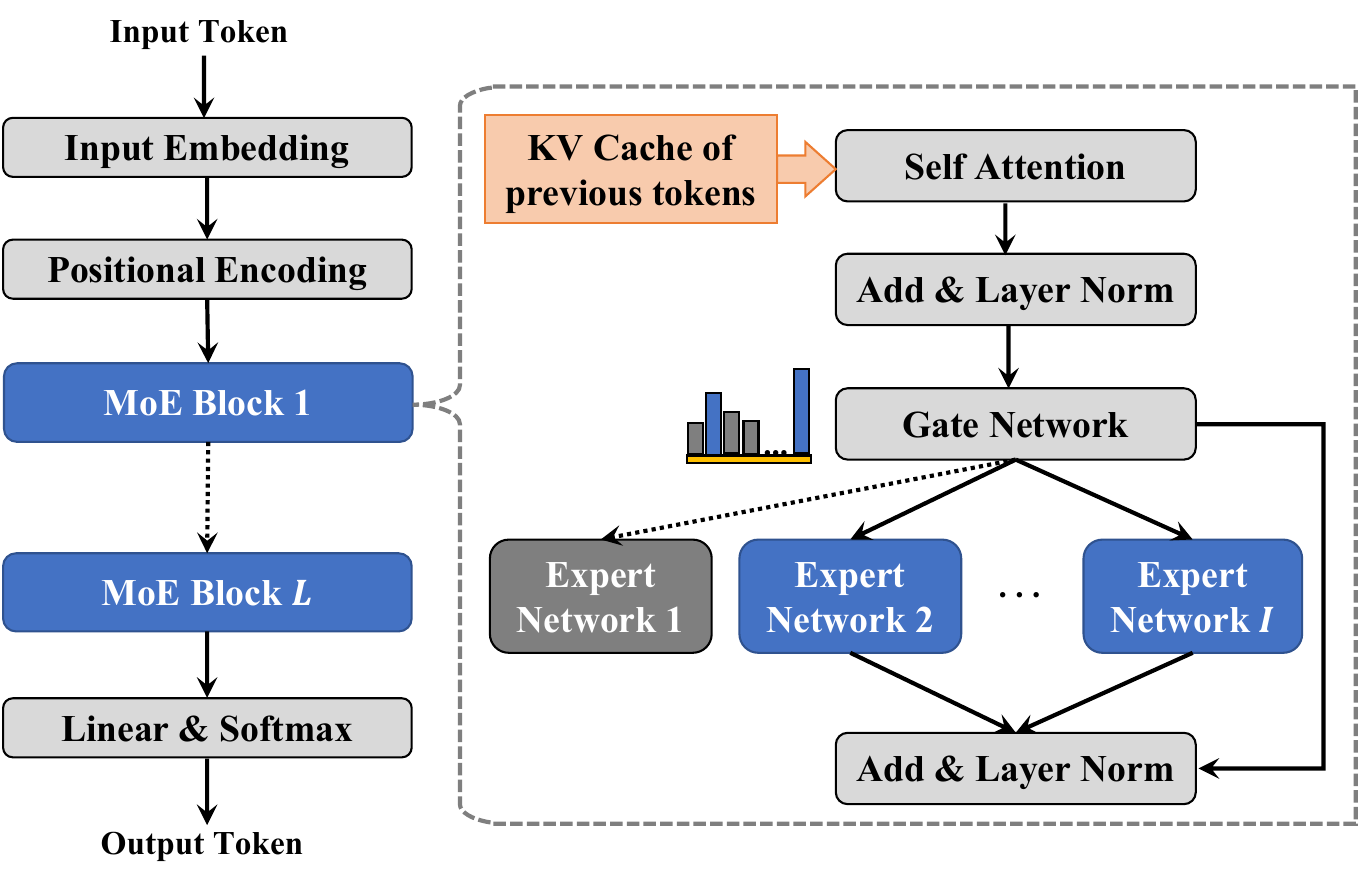}
    \caption{MoE architecture with example of top-2 activation.}
    \label{fig:moe_activation}
  \end{subfigure}

  \begin{subfigure}[t]{0.38\textwidth}
\includegraphics[width=\linewidth]{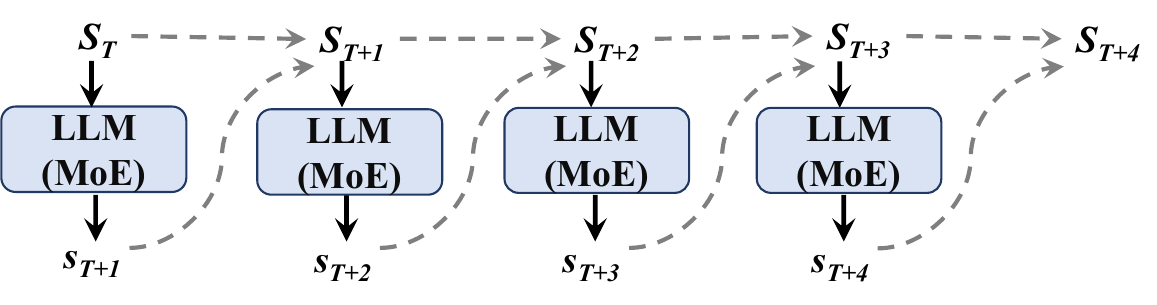}
\caption{Autoregressive MoE inference.}
\label{fig:auto_regressive}
  \end{subfigure}
\caption{MoE architecture and its autoregressive inference process.
\vspace{-5mm}
}
\label{fig:MoE_diag}
\end{figure}

\vspace{-3mm}
\subsection{Autoregressive MoE Inference}
\label{sec:moe_inference}

As shown in Fig.  \ref{fig:auto_regressive}, 
we consider an autoregressive token generation process, where the model produces the output sequence token by token. 
Let $S_T = (s_1, s_2, \dots, s_T)$ denote the prefixed token sequence with length of $T$.
The next token, termed as ${s}_{T+1}$, is sampled from the conditional probability distribution $\Pr (\cdot\mid S_T;\Psi)$, computed by the MoE model $\Psi$. 
The resulting  joint probability of a complete token sequence $S_{T+1}=(s_1, s_2, \dots, s_{T+1})$ is provided by the chain rule:
\begin{equation}
\label{eq:auto-reg}
\Pr(S_{T+1}  ; \Psi)
= \prod_{t=1}^{T+1} \Pr\bigl(s_t \mid S_{<t}; \Psi\bigr),
\end{equation}
where $S_{<t} = (s_0,s_1, \dots, s_{t-1})$ with $s_0$ being the initial prompt.
Here, the model predicts the next token $s_{t}$ by sampling from the output distribution $\Pr\bigl(s_t \mid S_{<t}; \Psi\bigr)$.
Iteratively, the newly generated token from the last MoE layer is cascaded with the prefix token sequence and fed to the first layer for the next generation step.


\subsection{Self-attention with KV Cache}
\label{sec:atten_KV}

As shown in Fig.~\ref{fig:moe_activation}, each MoE layer performs self-attention over previously generated tokens to derive contextual representations.
Consider an arbitrary MoE layer, where the layer index $\ell$ is omitted for notational simplicity.
Let $\mathbf{W}_q, \mathbf{W}_k \in \mathbb{R}^{d_k \times d}$ and $\mathbf{W}_v \in \mathbb{R}^{d_v \times d}$ denote the learnable projection matrices of the attention module.
For the $t$-th token, the corresponding query $\mathbf{q}_t \in \mathbb{R}^{d_k}$, key $\mathbf{k}_t \in \mathbb{R}^{d_k}$, and value $\mathbf{v}_t \in \mathbb{R}^{d_v}$ are computed from the attention input $\mathbf{z}_t \in \mathbb{R}^{d}$ as
\begin{equation}
\label{eq:kv_compute}
\mathbf{q}_t = \mathbf{W}_q \mathbf{z}_t,\quad
\mathbf{k}_t = \mathbf{W}_k \mathbf{z}_t,\quad
\mathbf{v}_t = \mathbf{W}_v \mathbf{z}_t.
\end{equation}

To avoid recomputing past features, the \emph{key--value} (KV) features are cached and reused in subsequent attention.
Specifically, the attention module  maintains a cache of all past keys and values:
$\mathbf{K}_{1:t-1} = [\,\mathbf{k}_1, \dots, \mathbf{k}_{t-1}\,] \in \mathbb{R}^{d_k \times (t-1)}, 
\mathbf{V}_{1:t-1} = [\,\mathbf{v}_1, \dots, \mathbf{v}_{t-1}\,] \in \mathbb{R}^{d_v \times (t-1)}$.
After computing $\mathbf{k}_t$ and $\mathbf{v}_t$ via \eqref{eq:kv_compute}, the cache is updated to
$ \mathbf{K}_{1:t} = [\,\mathbf{K}_{1:t-1}, \mathbf{k}_t\,],
\mathbf{V}_{1:t} = [\,\mathbf{V}_{1:t-1}, \mathbf{v}_t\,]$.
The self-attention output is then computed by reusing the cached features as
\begin{equation}
\label{eq:kv-attn}
\mathbf{u}_{t}= \mathbf{W}_o^{\sf{att}}\mathbf{V}_{1:t} \,
\mathrm{Softmax} \left( \frac{\mathbf{K}_{1:t}^\top \mathbf{q}_t}{\sqrt{d_k}} \right),
\end{equation}
where $\mathbf{W}_o^{\sf att} \in \mathbb{R}^{d \times d_v}$ projects the weighted sum from $\mathbb{R}^{d_v}$ back to the token-embedding dimension, and $\mathrm{Softmax}(\cdot)$ is applied over the cached token positions.

\subsection{Expert  Activation  and Inference Model}
\label{sec:system_model_expert_activation}
After self-attention, each token is processed by a subset of activated experts, as shown in Fig.~\ref{fig:moe_activation}.
Consider an arbitrary MoE layer and token, where the subscripts $(\ell,t)$ are omitted for clarity.
Let $\mathcal{I}=\{1,\dots,I\}$ denote the expert set of the layer, where $I=|\mathcal{I}|$ is the number of experts.
Based on the attention output in \eqref{eq:kv-attn}, the gating network produces a score vector $\mathbf{g} = [g_1,\dots,g_I]^{\sf T} \in \mathbb{R}^{I}$, given by
\begin{equation}
\label{eq:gating-softmax}
\mathbf{g} = \mathrm{Softmax}(\mathbf{W}_g \mathbf{u}),
\end{equation}
where $\mathbf{W}_g \in \mathbb{R}^{I \times d}$ denotes the gating matrix, and $g_i$ is the gating score of the $i$-th expert.
The resulting top-$K$ experts with the highest gating scores are activated, and the corresponding active expert set is denoted by $\hat{\mathcal{S}} \subseteq \mathcal{I}$, with cardinality $|\hat{\mathcal{S}}| = K$.

To quantify the distribution of $\hat{\mathcal{S}}$, we consider a classic \emph{probability proportional to size without replacement} (PPSWOR) model for sampling $K$ experts out of $I$ candidates~\cite {hartley1962sampling}. 
This theory associates each expert with an importance weight, denoted by $\omega_i>0$ for the $i$-th expert.
The resulting \emph{probability mass function} (PMF) of top-$K$ experts is given as
\begin{equation}
\label{eq:PPSWOR-set}
\Pr(\hat{\mathcal S}=\mathcal U)
=
\frac{\prod_{i\in\mathcal U}\omega_i}
{e_K(\omega_1,\dots,\omega_I)},
\forall \mathcal U\subseteq\mathcal I,\ |\mathcal U|=K,
\end{equation}
where 
\(e_K(\omega_1,\dots,\omega_I)\) is the \(K\)-th elementary symmetric polynomial, defined as
\begin{equation}
\label{eq:elemanry_sysmetric}
e_K(\omega_1,\dots,\omega_I)
\triangleq
\sum_{\substack{\mathcal U\subseteq\mathcal I\\|\mathcal U|=K}}
\prod_{i\in\mathcal U}\omega_i.
\end{equation}
Note that the activation probability of the $i$-th expert, denoted by $P_{i}
=
\Pr\!\big(i\in \hat{\mathcal S}\big)$, is a monotone increasing function of $\omega_i$ and can be computed by
\begin{equation}
\label{eq:prob_weight_relation}
P_{i}
=
1-
\frac{
e_K(\omega_{1},\dots,\hat{\omega}_{i},\dots,\omega_{I})
}{
e_K(\omega_{1},\dots,\omega_{I})
},
\end{equation}
where \(\hat{\omega}_{i}\) denotes omission of \(\omega_{i}\).

Next, each activated expert in $\hat{\mathcal{S}}$ applies its own FFN, denoted as $\mathrm{FFN}_i(\cdot)$,  to compute an output token embedding
$\mathbf{y}_{i} = \mathrm{FFN}_i(\mathbf{u})$.
Aggregating outputs from all $K$ activated experts yields the MoE layer output, computed by
\begin{equation}
\label{eq:moe-agg}
\hat{\mathbf{y}}
= \sum_{i \in \hat{\mathcal{S}}}
\alpha_{i} \, \mathbf{y}_{i},
\end{equation}
where the normalized weights are $\alpha_{i} =
\frac{g_{j}}{\sum_{j \in \hat{\mathcal{S}}} g_{j}},  \forall i\in \hat{\mathcal{S}}$
and $\alpha_{i} = 0$ otherwise.
Finally, a residual connection and layer normalization fuse $\hat{\mathbf{y}}$ back into the layer output.
The computation latency for executing the mentioned attention and expert inference is given by
\begin{equation}
\label{eq:FFN_latency}
T_{\sf cmp}
= \frac{W_{\sf cmp}}{f},
\end{equation}
where $W_{\sf cmp}$ denotes the computational workload, measured in \emph{floating-point operations} (FLOPs), and $f$ denotes the computation speed of satellites, measured in \emph{FLOPs per second} (FLOPS).


\section{ Design of SpaceMoE Architecture }

\label{sec:SNAKE_architecture}
In this section, we present the design of  SpaceMoE architecture, which comprises satellite functionalities, protocol, and two-level (layer and intra-layer) methods for MoE placement.
These components are discussed separately in the following subsections.
Finally, to enhance architectural efficiency, we formulate the  MoE placement problem of joint MoE placement and token routing, which is solved in the next section.

\subsection{Satellite Functionalities}
\label{sec:satellite_functions}

Deploying a large-scale MoE model on a single LEO satellite is fundamentally limited by onboard compute and memory. For example, the Switch Transformer, a representative large-scale MoE, has over 70 billion parameters, which require about 140~GB FP16 memory capacity~\cite{SwitchTransformer_JMLR_2022}. In contrast, a typical satellite processor such as the RAD5545 offers only roughly 4~GB DDR3 RAM and 1~GB flash~\cite{Saridakis_Aero_2016}. Therefore, a practical solution is to distribute the model over many satellites.

To this end, we consider partitioning the MoE model into blocks of experts and gating functions. They are then deployed across networked satellites and executed through inter-satellite cooperation.
This distributed deployment necessitates rethinking the functionality of satellites from an MoE perspective.
With such motivation, the proposed SpaceMoE comprises two kinds of satellites (see  Fig. \ref{fig:activation_diag}), and is elaborated below.

\begin{figure}[t!]
\centering
\includegraphics[width=1\columnwidth]{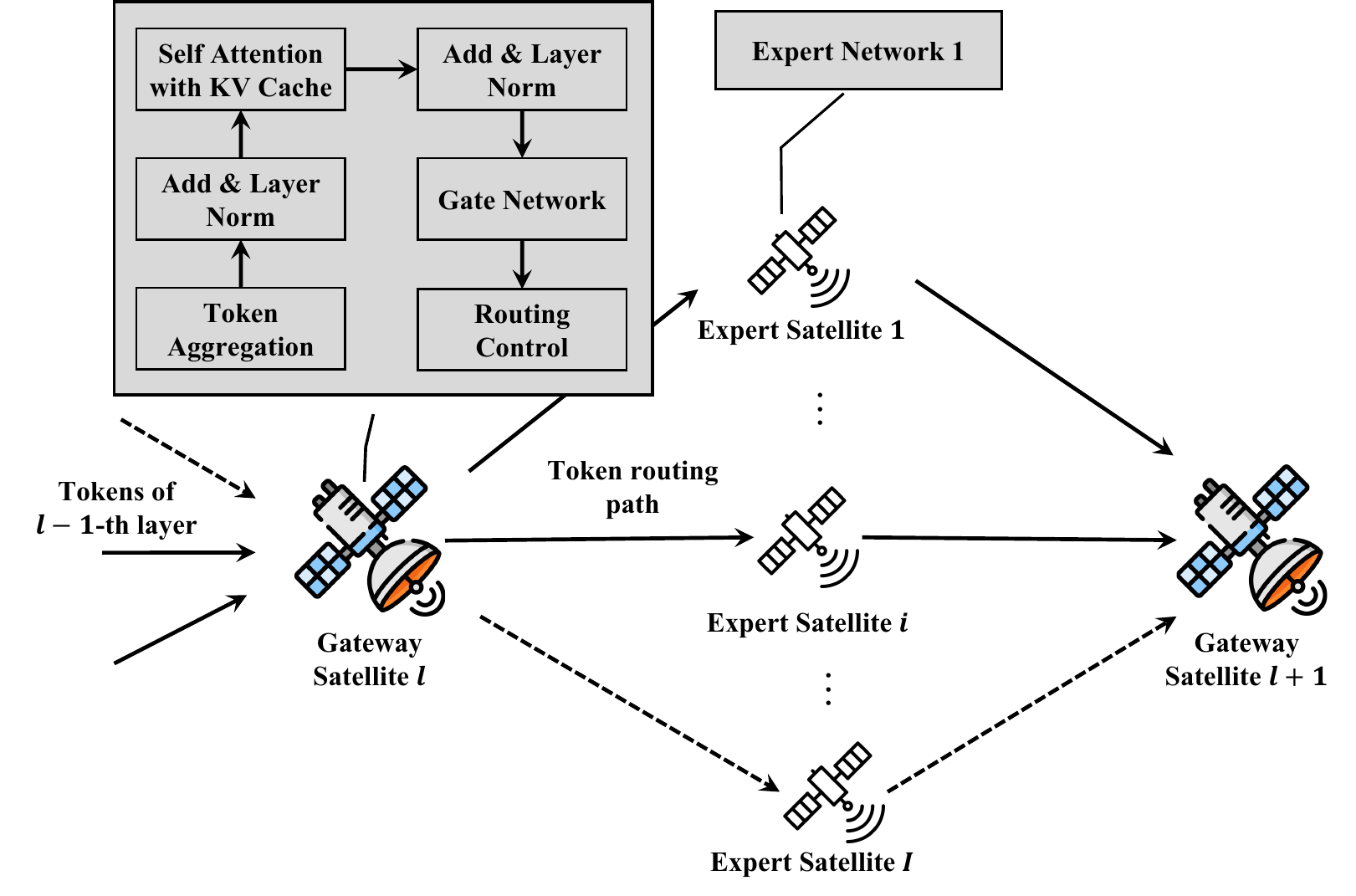}
\caption{Functional role of satellites in SpaceMoE.
\vspace{-5mm}
}
\label{fig:activation_diag}
\end{figure}

\begin{itemize}
\item \textit{Gateway Satellite:}  
    We consider $L$ gateway satellites in SpaceMoE; each hosts a gateway network (see Fig. \ref{fig:moe_activation}) associated with one MoE layer and thus is responsible for performing the layer-level control and aggregation functions.  
    Specifically, the gateway satellite of the $\ell$-th layer is equipped with:  
    (i) a self-attention module with onboard KV cache for contextual modeling,  
    (ii) a layer-normalization module for stabilizing token representations, and  
    (iii) a gating network for computing expert selection scores for that layer.  
The gateway satellite serves as the coordination point between successive MoE layers: it interfaces with the expert satellites of the $(\ell-1)$-th and $\ell$-th layers and provides the gating outputs required for expert selection and routing control.  

  \item \textit{Expert Satellite:}  
    Each expert satellite hosts a single expert network (e.g., FFN) associated with a specific MoE layer (see Fig. \ref{fig:moe_activation}).  
    It provides expert-level transformation of token embeddings assigned by the gateway satellite and interfaces with gateways for input and output exchange.

\end{itemize}
    In addition to the aforementioned functionalities, each satellite also serves as a \emph{token relay} in the space network.  
    Without unpacking the received token packets, a token-relay satellite forwards the tokens to their destinations by inspecting the packet headers and utilizing the current network topology.

\subsection{SpaceMoE Protocol}
\label{sec:infere_proto}

With the defined satellite functionalities, the E2E 
token generation protocol is designed, as illustrated in Fig.~\ref{fig:SNAKE_protocol} and described as follows.

\subsubsection{Service Request and Tokenization}
\begin{figure}
    \centering
\includegraphics[width=0.6\linewidth]{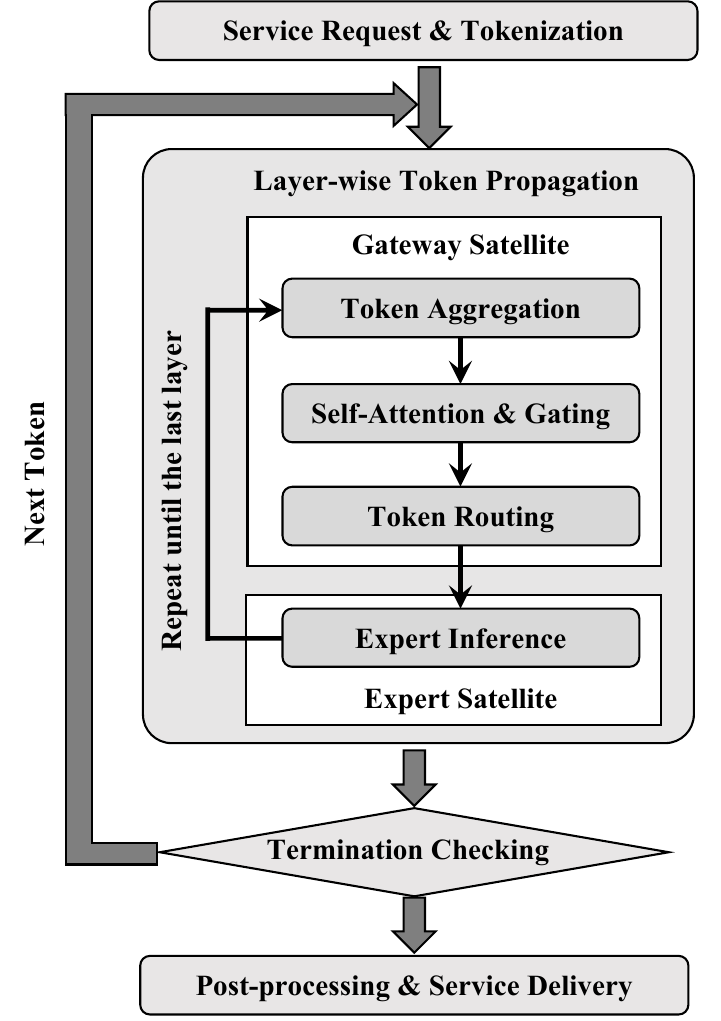}
    \caption{Token generation protocol of SpaceMoE.
    \vspace{-5mm}
    }
    \label{fig:SNAKE_protocol}
\end{figure}

The proposed SpaceMoE system receives an LLM service request either from a ground user via a direct ground-to-satellite link or from onboard satellite applications such as multimodal satellite sensing and spaceborne mission planning. The first satellite that receives the request is responsible for preprocessing the prompts (e.g., normalization and tokenization) to obtain the current token and encapsulates it into a network-compliant message. This message is then forwarded to the designated gateway satellite corresponding to the first MoE layer to initiate the distributed MoE inference.

\subsubsection{Layer-wise Token Propagation}
\label{sec:layer-wise-token_generation}
For the $\ell$-th layer, the online token processing proceeds as follows:
\begin{itemize}
    \item \textit{Token Aggregation:}
    The gateway satellite at layer $\ell$ aggregates the outputs of the $K$ activated experts from layer $\ell-1$ using the weighted sum in \eqref{eq:moe-agg}, following the MoE aggregation operation introduced in Sec.~\ref{sec:system_model_expert_activation}.
    
    \item \textit{Self-Attention and Gating:}
 After token aggregation, the gateway satellite first performs self-attention with onboard KV cache, as described in Sec.~\ref{sec:atten_KV}, and then computes the expert-selection scores for the current layer according to Sec.~\ref{sec:system_model_expert_activation}.

\item \textit{Token Routing:}
Given the current network topology $\mathcal{G}(n)$ and the gating scores, the gateway satellite routes the token to the top-$K$ selected expert satellites along the corresponding shortest paths, as described in Sec.~\ref{sec:multi-hop_routing}.

    \item \textit{Expert Inference:}
Each selected expert satellite processes the received token using its onboard expert network and returns the output to the gateway satellite of layer $\ell+1$, as described in Sec.~\ref{sec:system_model_expert_activation}.
\end{itemize}

\subsubsection{Termination Checking}
After the token is processed by the $L$-th layer, the gateway satellite checks the predefined stopping rule, namely, whether an end-of-sequence token is generated, the maximum output length is reached, or an application-specific pattern is satisfied; otherwise, the newly generated token is fed back to the first layer for the next decoding round.

\subsubsection{Post-Processing and Service Delivery}
Finally, the gateway satellite performs modality-level post-processing (e.g., converting token embeddings back to text) and delivers the response to the end application requester.

\vspace{-3mm}
\subsection{Ring-based MoE Layer Placement}
\label{sec:ring_align}

The placement of MoE parts in the satellite network consists of two levels: layer placement and intra-layer expert placement.
In this subsection, we focus on the former and present a ring-based method for layer placement, which leverages the cylindrical structure of the satellite constellation to facilitate the autoregressive process of MoE inference discussed in the preceding section.

\begin{figure}[t]
    \centering
\includegraphics[width=0.9\linewidth]{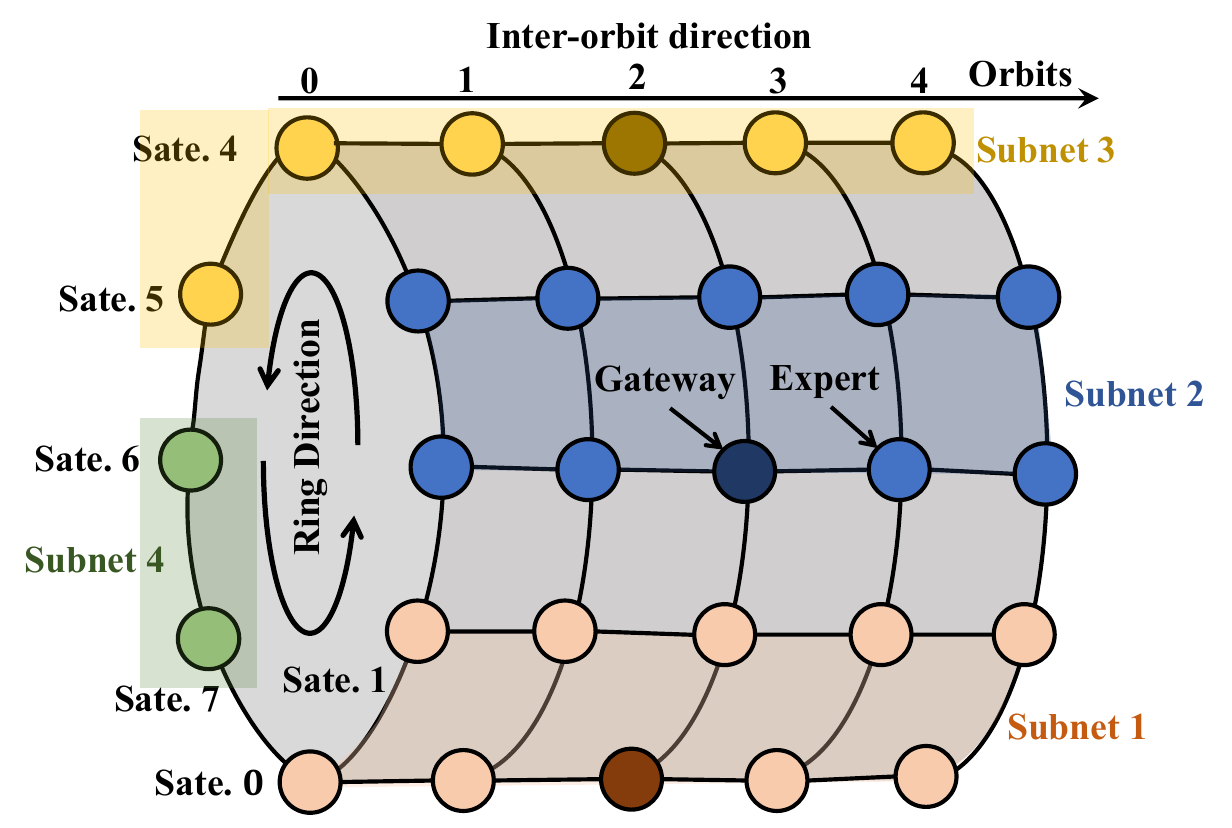}
    \caption{Ring-based MoE layer placement. An example of a 40-satellite constellation comprises 4 subsets along the ring direction.
    \vspace{-5mm}
    }
    \label{fig:SNAKE_archi}
\end{figure}

Each satellite in the polar-orbit constellation has four possible ISLs connecting to adjacent satellites, which form a static cylindrical-mesh topology, as shown in Fig.~\ref{fig:SNAKE_archi}.
This mesh extends along the ring (i.e., intra-orbit) and inter-orbit directions.
The key insights of ring-based connections are provided in Remark \ref{rem:subnet_decomp}.

\begin{Remark}[Ring-aligned Connections for Autoregressive MoE Inference]
\label{rem:subnet_decomp}
The cylindrical mesh provides a natural ring connectivity that can be exploited for autoregressive token generation in MoE inference with KV cache, as shown in Fig.~\ref{fig:auto_regressive}. Specifically, the cyclic topology allows the output of the last layer to be directly fed back as the input to the first layer. This mechanism places frequently communicating MoE layers in adjacent subnets, thereby reducing token-routing overhead.
\end{Remark}

Building on this observation, we decompose the cylindrical mesh into $L$ subnets, each of which is used to host one layer of the MoE model. 
We consider a large-scale space network with $N_y \geq L$, $N_x \left\lfloor\frac{N_y}{L}\right\rfloor \geq (I+1)$, and sufficient onboard memory so that each satellite can host at least one expert or gateway model.
In particular, the mesh is partitioned into $L$ disjoint subnets along the ring direction. Let $\mathcal{N}_\ell = \{ \mathcal{V}_\ell, \mathcal{E}_\ell \}$ denote the $\ell$-th subnet, where $\mathcal{V}_\ell$ is the satellite (node) set and $\mathcal{E}_\ell$ is the corresponding edge set. To be precise, $\mathcal{V}_\ell$ is mathematically defined as
\begin{equation}
    \mathcal{V}_\ell\!=\! \big\{ (x,y) \,\big|\, x \! \in\! \{0,\dots, N_x\!-\!1\},y\! \in\! \{ (\ell\!-\!1) y_\Delta , \dots, \ell y_\Delta \!-\!1\} \big\},
\end{equation}
where $y_\Delta = \left\lfloor \frac{N_y}{L} \right\rfloor$ denotes the uniform span along the ring direction. The edge set $\mathcal{E}_\ell$ follows the original cylindrical-mesh connectivity restricted to $\mathcal{V}_\ell$.


\subsection{Intra-subnet Gateway and Expert Placement}

Given the preceding MoE layer placement, this subsection focuses on placing the gateway and expert sub-model within a single layer onto the associated satellite subset.
Consider the $\ell$-th MoE layer and hence the $\ell$-th subnet. The details are provided as follows.

\subsubsection{Gateway  Placement}
\label{sec:gateway_place}
The location of the gateway satellite is denoted by $\phi_\ell \in \mathcal{V}_\ell$.
Given its coordination role and frequent interaction with multiple experts, the gateway should be placed at the center location within the subnet to reduce token routing latency.
Exploiting the symmetry of the cylindrical mesh along the inter-orbit direction, it follows that the gateway coordinates are given by 
\begin{equation}
\label{eq:gateway_map}
\phi^\star_\ell = \left( \left\lfloor \frac{N_x}{2} \right\rfloor,\; (\ell-1)y_\Delta+\left\lfloor \frac{y_\Delta- 1 }{2} \right\rfloor \right), \forall \ell.
\end{equation}
where $\left\lfloor \frac{N_x}{2} \right\rfloor$ ensures a central orbit for gateway satellite.


\subsubsection{Expert Placement Problem}
Building on the centralized gateway satellites, this subsection formulates the intra-subnet expert placement problem by specifying the control variables and design objective. The resulting problem is solved in the next section.

As mentioned, we consider one expert per satellite due to the limited storage and computational
resources, while the relaxation of this assumption is discussed in Sec.~\ref{sec:extensions}. Moreover, each satellite
hosts either one expert or one gateway, but not both. Therefore, after fixing the gateway location
$\phi_\ell^\star$, the candidate satellite set for expert placement at layer $\ell$ is $\mathcal V_\ell^{\sf{ex}} \triangleq \mathcal V_\ell \setminus \{\phi_\ell^\star\}$,
whose cardinality generally satisfies $V_{\ell}^e=|\mathcal V_\ell^{\sf{ex}}| \ge I$ for a mega LEO satellite constellation.

To quantify the control variables of expert placement, we define an injective assignment from the
expert set $\mathcal I_\ell$ to the candidate satellite set $\mathcal V_\ell^{\sf{ex}}$, as given below.
Consider layer $\ell$ with optimized gateway location $\phi_\ell^\star$ in \eqref{eq:gateway_map}.
The expert placement is represented by a binary matrix
\begin{equation}
\label{eq:expert_placement_matrix}
    \mathbf X_\ell = [X_{\ell,i,s}] \in \{0,1\}^{I \times V_{\ell}^e},
\end{equation}
where $X_{\ell,i,s}=1$ indicates that expert $i \in \mathcal I_\ell$ is assigned to candidate satellite
$s \in \mathcal V_\ell^{\sf{ex}}$. The matrix satisfies
\begin{subequations}
\label{eq:placement_constraints_general}
\begin{align}
    \sum_{s\in \mathcal V_\ell^{\sf{ex}}} X_{\ell,i,s} &= 1, \forall i\in \mathcal I_\ell,  
    \quad
\sum_{i\in \mathcal I_\ell} X_{\ell,i,s} \le 1,\forall s\in \mathcal V_\ell^{\sf{ex}},\label{cons:at_most_one_expert} 
\end{align}
\end{subequations}
where the left constraint  
enforces that each expert is assigned to exactly one satellite, while the right constraint  ensures that each  satellite hosts
at most one expert.



Next, we quantify the design objective in terms of the E2E token-generation latency by characterizing its dependence on the binary matrix. 
The associated quantities are defined as follows.
For each candidate satellite \(s\in\mathcal{V}_{\ell}^{\sf{ex}}\), we define its \emph{path latency} under topology \(\mathcal{G}(n)\)  as
\begin{equation}
\label{eq:path_latency}
\tau_{\ell,s}^{(n)}
\triangleq
T^{\sf cmp}_{\ell,s}+T^{\sf rou}_{\ell,s}(n),
\end{equation}
where \(T^{\sf cmp}_{\ell,s}\) denotes the computation latency of gateway and expert inference defined in \eqref{eq:FFN_latency}, and \(T^{\sf rou}_{\ell,s}(n)\) denotes the token routing latency under topology \(\mathcal{G}(n)\). Specifically, \(T^{\sf rou}_{\ell,s}(n)\) consists of the routing latency from the \(\ell\)-th gateway satellite \(\phi_\ell^\star\) to satellite \(s\), and then from satellite \(s\) to the \((\ell+1)\)-th gateway satellite \(\phi_{\ell+1}^\star\), given by
\begin{equation}
\label{eq:e2e_latency}
T^{\sf rou}_{\ell,s}(n)=
\begin{cases}
D_{\phi^\star_\ell,s}(n)+D_{s,\phi^\star_{\ell+1}}(n), & \ell=1,\dots,L-1,\\
D_{\phi^\star_L,s}(n)+D_{s,\phi^\star_1}(n), & \ell=L,
\end{cases}
\end{equation}
where \(D_{u,v}(n)\) denotes the multi-hop routing latency defined in \eqref{eq:Du_v_def}. The above piecewise form follows from the autoregressive MoE inference process, where the output of the \(L\)-th layer is routed back to the first-layer gateway.

Under the placement matrix \(\mathbf{X}_\ell\), the induced path latency of expert \(i\in\mathcal{I}_\ell\), termed the \emph{expert latency} and denoted by \(\hat{\tau}_{\ell,i}^{(n)}(\mathbf{X}_\ell)\), is defined as
\begin{equation}
\label{eq:induced_expert_latency}
\hat{\tau}_{\ell,i}^{(n)}(\mathbf{X}_\ell)
\triangleq
\sum_{s\in\mathcal{V}_{\ell}^{\sf{ex}}} X_{\ell,i,s}\,\tau_{\ell,s}^{(n)},
\end{equation}
where $ X_{\ell,i,s}$ indicates the expert-satellite assignment and $\tau_{\ell,s}^{(n)}$ is defined in \eqref{eq:path_latency}.
Based on the expert activation model in Sec.~\ref{sec:system_model_expert_activation}, the latency of layer \(\ell\) is determined by the slowest activated expert. Hence, under topology \(\mathcal{G}(n)\) and placement matrix \(\mathbf{X}_\ell\), the \emph{layer latency} is defined as
\begin{equation}
\label{eq:layer_latency}
\tau_{\ell}^{(n)}(\mathbf X_\ell)
\triangleq
\max_{i\in \hat{\mathcal S}_{\ell,n}}
\hat{\tau}_{\ell,i}^{(n)}(\mathbf X_\ell),
\end{equation}
where $\hat{\mathcal S}_{\ell,n}$ denotes the top-$K$ expert set at the \((\ell,n)\)-th layer-slot pair.

With the preceding definitions, the design objective is to minimize the expected token-generation latency over all \(L\) layers, given by
\begin{equation}
\label{eq:sum_layer_latency}
\mathbb{E}_{(\mathcal{G},\hat{\mathcal S})}\!\left[
\sum_{\ell=1}^{L}
\tau_{\ell}^{(n)}(\mathbf X_\ell)
\right]
=
\sum_{\ell=1}^{L}
\mathbb{E}_{(\mathcal{G},\hat{\mathcal S})}\!\left[
\tau_{\ell}^{(n)}(\mathbf X_\ell)
\right],
\end{equation}
where the expectation is taken over the topology \(\mathcal G\) and the top-\(K\) experts set \(\hat{\mathcal S}\).
Since the objective is separable across layers, the expert placement can be optimized independently to minimize the expected layer latency, i.e.,  $\mathbb{E}_{(\mathcal{G},\hat{\mathcal S})}\!\left[
\tau_{\ell}^{(n)}(\mathbf X_\ell)
\right]$.
Accordingly, for an arbitrary layer $\ell$, the expert-placement problem is formulated as
\begin{equation}
\label{prob:expert_placement_general}
\begin{aligned}
\min_{\mathbf X_\ell}\quad
& \mathbb E_{(\mathcal G,\hat{\mathcal S})}\!\left[\tau_\ell^{(n)}(\mathbf X_\ell)\right]\\
\text{s.t.}\quad
& \eqref{eq:expert_placement_matrix}, \eqref{eq:placement_constraints_general}.
\end{aligned}
\end{equation}

\section{Expert Placement Optimization}
\label{sec:SNAKE_Expert_placement}

Targeting the expert placement problem formulated in the preceding section, this section develops a practical solution. The key idea is to construct a tractable surrogate of the topology-varying path latency in problem \eqref{prob:expert_placement_general}. This allows the effect of expert placement on the path latency to be quantified and then the optimal strategy for expert placement to be derived in closed form.

\vspace{-3mm}
\subsection{Surrogate of Expected Layer Latency}

While Sec. \ref{sec:ring_align} focuses on MoE layer placement, we consider intra-layer expert placement formulated in problem \eqref{prob:expert_placement_general}, i.e., optimizing the mapping of experts to satellites associated with the same layer.
To simplify notation, without loss of generality, we consider an arbitrary layer and omit the layer index $\ell$.

The main difficulty in solving problem~\eqref{prob:expert_placement_general} lies in the topology-dependent path latency $\tau_{s}^{(n)}$ in \eqref{eq:path_latency}, which depends on the random network topology $\mathcal{G}$.
To address this issue, we replace the instantaneous path latency with its expectation over all topology realizations, \(\{\mathcal{G}(n)\}_{n=1}^{N_T}\).
Note that the resultant expected path latency, denoted as $\bar{\tau}_{s}$, is associated with the shortest path linking the gateway in the current layer to satellite $s$ and the gateway in the next layer, as defined in \eqref{eq:ave_satellite_path_latency}.
Let
$\alpha_n \triangleq \Pr(\mathcal{G}=\mathcal{G}(n))$
denote the probability of the \(n\)-th realization. Then, for each satellite \(s\in\mathcal{V}^{\sf ex}\), the expected path latency is defined as 
\begin{equation}
\label{eq:ave_satellite_path_latency}
\bar{\tau}_{s}
\triangleq
\mathbb{E}_{\mathcal{G}} \left[\tau_{s}^{(n)}\right]
=
\sum_{n=1}^{N_T}\alpha_n \tau_{s}^{(n)}.
\end{equation}

The approximation $\bar{\tau}_{s}\approx \tau_{s}^{(n)}$
not only makes problem~\eqref{prob:expert_placement_general} tractable, but also facilitates practical implementation. In particular, it is desirable to keep the MoE placement fixed across different network topologies. Otherwise, migrating model parameters as the topology evolves would incur potentially prohibitive communication overhead. This renders a dynamic placement strategy that adapts to time-varying topologies impractical. By instead minimizing the said surrogate, we avoid the difficulty while still capturing the essential performance trade-offs.

Next, based on the metric of expected path latency, the (intra-layer) expert placement matrix is defined as follows. 
Without loss of generality, we assume the expected path latencies of the $V^e=|\mathcal{V}^{\sf ex}|$ satellites follow the  nondecreasing order:
\begin{equation}
\label{eq:ranked_ave_path_latency}
\bar{\tau}_{1}\le \cdots \le \bar{\tau}_{s}\le \cdots \le \bar{\tau}_{V^e}.
\end{equation}
Then assigning an expert to a satellite with a latency rank exceeding $I$ is never beneficial from the perspective of latency minimization.
Therefore, the $I\times V^e$ expert placement matrix in  \eqref{eq:expert_placement_matrix} reduces to an $I\times I$ square matrix defined in Definition~\ref{def:expert_permutation_matrix}.

\begin{Definition}[Expert Placement Matrix]
\label{def:expert_permutation_matrix}
The expert placement matrix is a $I\times I $ binary matrix: 
\begin{equation}
    \label{eq:permutation_mat}
    \mathbf X=[X_{i,s}] \in \{0,1\}^{I\times I},
\end{equation}
where \(X_{i,s}=1\) indicates that expert \(i\in\mathcal I\) is assigned to the satellite with latency of $\overline{\tau}_s$, and \(X_{i,s}=0\) otherwise. 
Given one-to-one mappings between experts and satellites of the same layer under consideration, 
\(\mathbf X\) satisfies
\begin{equation}
\label{cons:permutation_mat}
    \sum_{s\in \mathcal I} X_{i,s}=1,\ \forall i\in\mathcal I,
\quad
\sum_{i\in\mathcal I} X_{i,s}=1,\ \forall s\in\mathcal I.
\end{equation}
\end{Definition}

Last, the design objective in the original problem \eqref{prob:expert_placement_general} needs to be rewritten based on the earlier approximation $\tau_{s}^{(n)} \approx \bar{\tau}_{s}$ 
and the placement matrix in 
Definition \ref{def:expert_permutation_matrix}.
Specifically, the latency associated with the inference path passing  $i$-th expert, termed the $i$-th expert latency, can be expressed as a function of the expected path latency of satellites in the same layer as
\begin{equation}
\label{eq:expected_induced_expert_latency}
\bar{\tau}_{i}(\mathbf{X})
=
\sum_{s\in\mathcal{I}} X_{i,s}\,\bar{\tau}_{s},
\end{equation}
where $\mathbf{X}$ is the  expert placement matrix.
Since the layer latency in \eqref{eq:layer_latency} is determined by the highest latency of experts in the layer, it follows that
\begin{equation}
\label{eq:layer_latency_update}
\tau_{\ell}^{(n)}(\mathbf X) 
\approx \bar{\tau}_{\sf max}(\mathbf X)
 \triangleq
\max_{i\in \hat{\mathcal S}}
\bar{\tau}_{i}(\mathbf{X}) 
.
\end{equation}
As a result, given the path-latency surrogate, the objective of problem~\eqref{prob:expert_placement_general}, termed \emph{layer computation latency}, can be rewritten as follows,
\begin{equation}
\label{eq:surro_expected_latency_1}
\bar{\tau}_{\sf c}(\mathbf X)
\triangleq
\mathbb{E}_{\hat{\mathcal S}}  \left[
\bar{\tau}_{\sf max}(\mathbf X)
\right]
,
\end{equation}
where the expectation  \(\mathbb{E}_{\hat{\mathcal S}}[\cdot]\) is taken over the distribution of the top-\(K\) expert set \(\hat{\mathcal S}\). The slot index $n$ is omitted for notational simplicity.
The resulting expert placement problem is 
\begin{equation}
\label{eq:surrogate_min}
\begin{aligned}
\min_{{\mathbf{X}}}\quad
& \bar{\tau}_{\sf c}(\mathbf X) \\
\text{s.t.}\quad
& \eqref{eq:permutation_mat},  \eqref{cons:permutation_mat}.
\end{aligned}
\end{equation}
We solve problem~\eqref{eq:surrogate_min} in the following subsections.


\vspace{-3mm}
\subsection{Analysis of Layer Computation Latency}
\label{sec:analysis_layer-comp}

To facilitate solving the problem \eqref{eq:surrogate_min}, the layer computation latency, $\bar{\tau}_{\sf c}(\mathbf X)$, previously defined in \eqref{eq:surro_expected_latency_1} is mathematically characterized as follows. 
To begin with, the index of the slowest active satellite, namely the one whose expected path latency is equal to the layer latency  $\bar{\tau}_{\sf max}(\mathbf X)$ in \eqref{eq:layer_latency_update}, is identified. 
\begin{Definition}[Slowest Active Satellite]
\label{def:largest_activated_rank}
The slowest active satellite is represented by its index $R_{\mathbf X}$, given as
\begin{equation}
\label{eq:def_largest_activated_rank}
R_{\mathbf X}
\triangleq
\max_{i\in\hat{\mathcal S}}
\left\{
\sum_{s=1}^{I} s\,X_{i,s}
\right\},
\end{equation}
where $X_{i,s}\in \{0,1\}$
is the $(i,s)$-th element of the placement matrix that maps the $i$-th expert onto the satellite with the $s$-th smallest expected path latency.
\end{Definition}


It follows that $ \bar{\tau}_{R_{\mathbf X}}  =  \bar{\tau}_{\sf max}(\mathbf X)$ and the resulting layer computation latency can be written as
\begin{equation}
\label{eq:def_surrogate}
\bar{\tau}_{\sf c}(\mathbf X)
=
\mathbb{E}_{\hat{\mathcal S}} \left[\bar{\tau}_{R_{\mathbf X}}\right]
=
\sum_{s=K}^{I}\Pr \big(R_{\mathbf X}=s\big)\bar{\tau}_{s},
\end{equation}
where $\Pr \big(R_{\mathbf X}=s\big)$ is the PMF of $R_{\mathbf{X}}$.
Since there are $K$ active experts (or equivalently $K$ satellites), the minimum of $R_{\mathbf{X}}$ is $ K$ if $ K$ experts/satellites with the lowest expected path latency are activated. 
Next, the layer computation latency is related to the slowest active satellite  \(R_{\mathbf{X}}\).

\begin{Lemma}[Layer Computation Latency]
\label{lemma:surrogate_expansion}
The layer computation latency in \eqref{eq:def_surrogate} is a monotonically decreasing function of the cumulative distribution function (CDF) of $R_{\mathbf X}$:
\begin{equation}
\label{eq:expaned_surrogate}
\bar{\tau}_{\sf c}(\mathbf X)
=
\sum_{s=1}^{I}
\Bigl(1-\Pr \bigl(R_{\mathbf X}<s\bigr)\Bigr)\Delta \bar{\tau}_{s},
\end{equation}
where $\Delta \bar{\tau}_{s}\triangleq \bar{\tau}_{s}-\bar{\tau}_{s-1}\ge 0,
\bar{\tau}_{0}\triangleq 0$
and the CDF
\begin{equation}
\label{eq:CDF}
\Pr \bigl(R_{\mathbf X}<s\bigr)
=
\sum_{j=1}^{s-1}\Pr \bigl(R_{\mathbf X}=j\bigr).
\end{equation}
\end{Lemma}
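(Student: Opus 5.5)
The identity is an Abel summation (summation by parts) applied to \eqref{eq:def_surrogate}. The plan is to telescope each $\bar{\tau}_s$ into increments and then swap the order of summation.

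First, extend the sum in \eqref{eq:def_surrogate} to $s=1,\dots,I$. This is harmless because $\Pr(R_{\mathbf X}=s)=0$ for $s<K$: the $K$ active experts occupy $K$ distinct ranks, so their maximum rank is at least $K$. We can then write $\bar{\tau}_{\sf c}(\mathbf X)=\sum_{j=1}^{I}\Pr(R_{\mathbf X}=j)\,\bar{\tau}_j$.

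Next, telescope each latency using $\bar\tau_0\triangleq 0$:
\begin{equation*}
\bar{\tau}_j=\sum_{s=1}^{j}\Delta\bar{\tau}_s .
\end{equation*}
Substituting and exchanging the finite double sum over $1\le s\le j\le I$ gives
\begin{equation*}
\bar{\tau}_{\sf c}(\mathbf X)=\sum_{s=1}^{I}\Delta\bar{\tau}_s\sum_{j=s}^{I}\Pr(R_{\mathbf X}=j)=\sum_{s=1}^{I}\Delta\bar{\tau}_s\,\Pr(R_{\mathbf X}\ge s).
\end{equation*}
Since $R_{\mathbf X}$ is supported on $\{1,\dots,I\}$, we have $\Pr(R_{\mathbf X}\ge s)=1-\Pr(R_{\mathbf X}<s)$, with the CDF given by \eqref{eq:CDF}. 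This yields \eqref{eq:expaned_surrogate}.

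For the monotonicity claim, note that $\Delta\bar\tau_s\ge 0$ for every $s$. For $s\ge 2$ this follows from the ordering \eqref{eq:ranked_ave_path_latency}. For $s=1$ it holds because $\bar\tau_1\ge 0$, as latencies are nonnegative by \eqref{eq:path_latency} and \eqref{eq:ave_satellite_path_latency}. Each coefficient $1-\Pr(R_{\mathbf X}<s)$ therefore enters with a nonnegative weight. Hence $\bar{\tau}_{\sf c}$ is nonincreasing in each CDF value $\Pr(R_{\mathbf X}<s)$, and strictly decreasing wherever $\Delta\bar\tau_s>0$.

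The only step needing care is the support and boundary bookkeeping: the lower limit $K$ versus $1$, and the $\bar\tau_0$ convention. The argument itself is routine.
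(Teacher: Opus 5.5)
Your proposal is correct and follows essentially the same route as the paper's proof: telescope $\bar{\tau}_s$ into increments, exchange the finite double sum to obtain $\Pr(R_{\mathbf X}\ge s)$, and take complements, with monotonicity coming from $\Delta\bar{\tau}_s\ge 0$. Extending the lower limit to $1$ via $\Pr(R_{\mathbf X}=s)=0$ for $s<K$ is equivalent to the paper's inner lower limit $\max\{K,j\}$. Your observation that $\Delta\bar{\tau}_1=\bar{\tau}_1\ge 0$ follows from nonnegativity of latencies supplies a detail the paper only asserts.
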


\begin{proof}
See Appendix~\ref{proof_lemma:surrogate_expansion}.
\end{proof}


\vspace{-5mm}

\subsection{Optimal Intra-layer Expert Placement}
\label{sec:effects_placement}

Building on the preceding results and the assumed PPSWOR model in \eqref{eq:PPSWOR-set}, we are ready to derive the optimal expert placement matrix. Given expert placement, the importance weight of the $s$-th expected path latency, i.e., $\bar{\tau}_s$, can be computed as
\begin{equation}
\label{eq:reordered_weight}
\tilde{\omega}_s(\mathbf X)
\triangleq
\sum_{i=1}^I X_{i,s}\omega_i,
\end{equation}
where $\omega_i$ is the importance weight of the expert $i$ defined in \eqref{eq:PPSWOR-set}.

\begin{Lemma}[CDF of Slowest Active Satellite]
\label{Lemma:CDF}
Consider the placement matrix $\mathbf X$ and the corresponding importance weights of ranked latencies in \eqref{eq:reordered_weight}. For any $s\in\{K+1,\dots,I\}$, the CDF of the slowest active satellite rank $R_{\mathbf X}$ is
\begin{equation}
\label{eq:CDF_express}
\Pr\!\big(R_{\mathbf X}<s\big)
=
\frac{
e_K\!\big(\tilde{\omega}_1(\mathbf X),\dots,\tilde{\omega}_{s-1}(\mathbf X)\big)
}{
e_K\!\big(\omega_1,\dots,\omega_I\big)
},
\end{equation}
where the numerator is the $K$-th elementary symmetric polynomial of the reordered importance weights over the first $s-1$ latency ranks, namely,
\begin{equation}
\label{eq:nominator}
e_K\!\big(\tilde{\omega}_1(\mathbf X),\dots,\tilde{\omega}_{s-1}(\mathbf X)\big)
=
\sum_{\substack{\mathcal U\subseteq\{1,\dots,s-1\}\\ |\mathcal U|=K}}
\prod_{j\in\mathcal U}\tilde{\omega}_j(\mathbf X).
\end{equation}
\end{Lemma}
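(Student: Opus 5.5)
The plan is to rewrite the event $\{R_{\mathbf X}<s\}$ as a set-inclusion event for the active set, and then sum the PPSWOR probabilities in \eqref{eq:PPSWOR-set} over that event.

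First, recall that $\mathbf X$ is a permutation matrix by \eqref{cons:permutation_mat}. For each expert $i$ let $\sigma(i)=\sum_{s=1}^{I} s\,X_{i,s}$ be its latency rank; then $\sigma$ is a bijection on $\mathcal I$. By Definition~\ref{def:largest_activated_rank}, $R_{\mathbf X}=\max_{i\in\hat{\mathcal S}}\sigma(i)$. Hence $R_{\mathbf X}<s$ holds exactly when every active expert has rank in $\{1,\dots,s-1\}$, that is,
\begin{equation*}
\{R_{\mathbf X}<s\}=\{\sigma(\hat{\mathcal S})\subseteq\{1,\dots,s-1\}\}.
\end{equation*}
Since $|\hat{\mathcal S}|=K$, this event is the disjoint union over all $K$-subsets $\mathcal U\subseteq\{1,\dots,s-1\}$ of the events $\{\hat{\mathcal S}=\sigma^{-1}(\mathcal U)\}$. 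For $s\ge K+1$ there are $s-1\ge K$ ranks available, so the union is nonempty.

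Second, apply \eqref{eq:PPSWOR-set} to each term:
\begin{equation*}
\Pr\big(\hat{\mathcal S}=\sigma^{-1}(\mathcal U)\big)=\frac{\prod_{i\in\sigma^{-1}(\mathcal U)}\omega_i}{e_K(\omega_1,\dots,\omega_I)}.
\end{equation*}
By \eqref{eq:reordered_weight}, $\tilde\omega_j(\mathbf X)=\omega_{\sigma^{-1}(j)}$, because column $j$ of $\mathbf X$ contains exactly one $1$. The numerator therefore equals $\prod_{j\in\mathcal U}\tilde\omega_j(\mathbf X)$. Summing over all such $\mathcal U$ gives the numerator \eqref{eq:nominator}, and the denominator stays $e_K(\omega_1,\dots,\omega_I)$.

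The denominator is permutation invariant, so it can equivalently be written $e_K(\tilde\omega_1,\dots,\tilde\omega_I)$. The main obstacle is not a calculation but the bookkeeping: correctly identifying $\tilde\omega_j$ with the weight of the expert placed at rank $j$, and checking that the map $\mathcal U\mapsto\sigma^{-1}(\mathcal U)$ is a bijection between the $K$-subsets of ranks $\{1,\dots,s-1\}$ and the active sets in the event. Both follow directly from $\mathbf X$ being a permutation matrix.
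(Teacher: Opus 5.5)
Your proposal is correct and follows essentially the same route as the paper. Both identify $\{R_{\mathbf X}<s\}$ with the event that the $K$ active experts occupy some $K$-subset of the first $s-1$ latency ranks, then sum the PPSWOR probabilities using the fact that rank $j$ carries weight $\tilde{\omega}_j(\mathbf X)$; your explicit permutation $\sigma$ and the bijection $\mathcal U\mapsto\sigma^{-1}(\mathcal U)$ simply make the paper's bookkeeping more precise.
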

\begin{proof}
See Appendix~\ref{Proof_Lemma:CDF}.
\end{proof}

Lemma~\ref{Lemma:CDF} shows that the CDF of the slowest active satellite rank is the ratio of two $K$-th elementary symmetric polynomials. The denominator $e_K(\omega_1,\dots,\omega_I)$ is invariant to the placement matrix $\mathbf X$, since it depends only on the original importance weights. In contrast, the numerator depends on the reordered importance weights assigned to the first $\{s-1\}$ latency ranks, and is therefore placement-dependent. Since this numerator is coordinate-wise increasing in $(\tilde{\omega}_1(\mathbf X),\dots,\tilde{\omega}_{s-1}(\mathbf X))$, the CDF $\Pr(R_{\mathbf X}<s)$ increases when larger importance weights are placed to smaller latency ranks. Combined with Lemma~\ref{lemma:surrogate_expansion}, this observation is leveraged to minimize the layer computation latency, which leads to the following main result of the section.

\begin{Theorem}[Optimal Intra-layer Expert Placement]
\label{theorem:pop_place}
Consider the surrogate expert-placement problem in \eqref{eq:surrogate_min} for an arbitrary MoE layer.
Relabel the experts such that their activation probabilities satisfy
$P_1 \ge P_2 \ge \cdots \ge P_I$, where $P_i=\Pr(i\in \hat{\mathcal{S}})$.
Relabel the candidate satellites such that their expected path latencies satisfy
$\bar{\tau}_1 \le \bar{\tau}_2 \le \cdots \le \bar{\tau}_{V^e}$, where $V^e\ge I$.
The optimal placement policy assigns the $i$-th most frequently activated expert to the $i$-th lowest-latency satellite, i.e.,
\begin{equation}
\label{eq:optimal_ordered_placement}
X^\star_{i,s}
=
\begin{cases}
1, & s=i,\\
0, & \text{otherwise},
\end{cases}
\qquad
\forall i,s\in \mathcal{I}.
\end{equation}
\end{Theorem}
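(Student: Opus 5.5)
The plan is to combine Lemma~\ref{lemma:surrogate_expansion} and Lemma~\ref{Lemma:CDF} and show that the identity placement \emph{simultaneously} maximizes every term $\Pr(R_{\mathbf X}<s)$, $s=1,\dots,I$. Since each weight $\Delta\bar{\tau}_s\ge 0$ in \eqref{eq:expaned_surrogate}, a placement that maximizes all CDF values at once minimizes $\bar{\tau}_{\sf c}(\mathbf X)$. This avoids any exchange argument on the objective itself. For $s\le K$ we have $\Pr(R_{\mathbf X}<s)=0$ for every $\mathbf X$, so only $s\in\{K+1,\dots,I\}$ matters. There Lemma~\ref{Lemma:CDF} gives the CDF as $e_K(\tilde{\omega}_1(\mathbf X),\dots,\tilde{\omega}_{s-1}(\mathbf X))/e_K(\omega_1,\dots,\omega_I)$, and the denominator does not depend on $\mathbf X$.

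First I would convert the ordering of activation probabilities into an ordering of importance weights. By \eqref{eq:prob_weight_relation}, $P_i$ is monotone increasing in $\omega_i$, so $P_1\ge\cdots\ge P_I$ corresponds to an order with $\omega_1\ge\cdots\ge\omega_I$. Ties in $P$ can be broken consistently, since equal $P_i$ should mean equal $\omega_i$, by considering $P_i-P_j$ as a function of $\omega_i$ with the others fixed. To justify the monotonicity, I would write $P_i=\omega_i e_{K-1}(\omega_{-i})/e_K(\omega)$ and compare $P_i$ and $P_j$ via $e_K(\omega)=\omega_i\omega_j e_{K-2}(\omega_{-ij})+(\omega_i+\omega_j)e_{K-1}(\omega_{-ij})+e_K(\omega_{-ij})$. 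This gives $P_i-P_j\propto(\omega_i-\omega_j)e_{K-1}(\omega_{-ij})$, which settles both directions.

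The key step is that, for fixed $s$, the numerator $e_K$ of any $s-1$ weights chosen from the multiset $\{\omega_i\}$ is maximized by the $s-1$ largest weights. Under the identity placement those are exactly $\omega_1,\dots,\omega_{s-1}$. To prove this, note that $e_K$ is symmetric, so only the chosen subset matters. Replacing a chosen weight $\omega_a$ by an unchosen weight $\omega_b\ge\omega_a$ changes $e_K$ by $(\omega_b-\omega_a)e_{K-1}(\text{rest})\ge0$, because $e_K$ is affine in each argument with nonnegative coefficient $e_{K-1}$ of the remaining positive weights. Iterating such swaps reaches the top-$(s-1)$ set without ever decreasing $e_K$. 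Hence $\Pr(R_{\mathbf X}<s)\le\Pr(R_{\mathbf X^\star}<s)$ for every $s$ and every permutation $\mathbf X$.

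Finally, plug this into \eqref{eq:expaned_surrogate}: each coefficient $1-\Pr(R_{\mathbf X}<s)$ is minimized by $\mathbf X^\star$, and each $\Delta\bar{\tau}_s\ge0$, so $\bar{\tau}_{\sf c}(\mathbf X^\star)\le\bar{\tau}_{\sf c}(\mathbf X)$. The reduction from $V^e$ candidate satellites to the $I$ fastest ones, argued before Definition~\ref{def:expert_permutation_matrix}, extends optimality to the original injective assignments. That reduction holds because replacing any used satellite of rank above $I$ with an unused one of rank at most $I$ lowers each $\bar\tau$ pointwise. The main obstacle I expect is the correspondence between $P$-ordering and $\omega$-ordering, in particular handling ties cleanly. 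The swap argument itself is routine once the multilinearity of $e_K$ is invoked.
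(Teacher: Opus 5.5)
Your proof is correct, but it takes a different route from the paper. The paper uses a local exchange argument. It takes any placement with an adjacent inversion $\tilde{\omega}_a(\mathbf X)<\tilde{\omega}_{a+1}(\mathbf X)$ and swaps the two experts. By Lemma~\ref{Lemma:CDF}, only $\Pr(R_{\mathbf X}<a+1)$ can change under the swap, and it weakly increases. Lemma~\ref{lemma:surrogate_expansion} then shows the objective does not increase, and repeated swaps (bubble sort) reach the ordered placement.

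You instead prove a global statement in one step: for each $s$, $e_K$ over any $(s-1)$-subset of the weights is maximized by the $s-1$ largest weights. Hence $\mathbf X^\star$ maximizes every CDF value simultaneously, i.e., $R_{\mathbf X^\star}$ is first-order stochastically smallest among all placements.

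Both arguments rest on the same ingredient, namely that $e_K$ is affine in each argument with nonnegative coefficient $e_{K-1}$. The paper's chain of swaps implicitly yields the same pointwise CDF dominance. Your version avoids the iterative sorting procedure and any discussion of how ties among sorted placements behave. It also makes transparent that $\mathbf X^\star$ is optimal for every nondecreasing latency profile $\bar{\tau}_1\le\cdots\le\bar{\tau}_I$. The paper's local version, in turn, pins down exactly when a single swap strictly helps: $a\ge K$, a strict inversion, and $\Delta\bar{\tau}_{a+1}>0$. That information is useful for uniqueness or incremental-improvement arguments.

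Two of your auxiliary steps tighten points the paper only asserts. First, the identity $(P_i-P_j)\,e_K(\omega)=(\omega_i-\omega_j)\,e_{K-1}(\omega_{-ij})$ properly shows that the $P$-ordering coincides with the $\omega$-ordering. The paper's remark that $P_i$ is increasing in $\omega_i$ does not give this by itself, since $P_i$ also depends on $\omega_j$. Second, your replacement argument justifies reducing from $V^e$ candidate satellites to the $I$ fastest ones.

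The identity needs $e_{K-1}(\omega_{-ij})>0$, which holds for $K\le I-1$. Your claim that equal $P_i$ forces equal $\omega_i$ fails when $K=I$, but that case is trivial, since then $R_{\mathbf X}=I$ for every placement.
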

\begin{proof}
See Appendix~\ref{proof_pop_place}.
\end{proof}

Theorem~\ref{theorem:pop_place} 
provides a simple placement strategy: experts with higher activation probabilities should be assigned to satellites with lower expected path latencies. Since activation probabilities can be estimated empirically during model training, this policy can be easily implemented to enable practical MoE deployment in a satellite network.
For scalability, the proposed placement rule only requires sorting the expert activation probabilities and the expected path latencies, resulting in a per-layer complexity of \(\mathcal{O}(I\log I+V_\ell^{e}\log V_\ell^{e})\).
Since the problem \eqref{prob:expert_placement_general} treats each subnet as a general weighted graph and absorbs satellite mobility and ISL disruptions into the expected path latency, the same ordering-based rule can be applied to diverse satellite constellations, including Walker and rosette-type constellations.

\section{Discussion and Extension}
\label{sec:extensions}

This section provides discussions on satellite implementation issues and extension of the preceding expert placement strategies to relax the assumption on satellite memory.

\subsection{Effects of Space Network Parameters}

\subsubsection{Orbital Altitude}
As quantified by \eqref{eq:propagation}, a higher orbital altitude incurs a longer propagation delay between satellites, thereby increasing the latency of every routing path.
If this increase acts as a proportional scaling across candidate paths, then the ordering of expected path latencies remains unchanged.
Hence, the ordering-based expert placement rule in Theorem~\ref{theorem:pop_place} is preserved, whereas the token-generation latency increases accordingly.

\subsubsection{Satellite Constellation Size}
The constellation size is determined by two factors: the number of orbital planes and the number of satellites per plane.
Increasing either factor enlarges the set of candidate satellites and thus improves the opportunity to place experts on satellites with smaller expected path latencies.
Therefore, for a fixed MoE model, the token-generation latency reduces as the constellation size grows.

\subsubsection{Space Weather}
In SpaceMoE, the impact of space weather on ISL availability is modeled through the Bernoulli link-survival probability \(P^{\sf sw}_{u,v}(n)\) in \eqref{eq:link_exist}.
Milder space weather corresponds to a larger $P^{\sf sw}_{u,v}(n)$, which increases link availability and improves network connectivity.
As a result, satellites connected by more reliable links tend to have smaller expected path latencies to the gateway satellites.
According to Theorem~\ref{theorem:pop_place}, such satellites should host more frequently activated experts.
Therefore, the token-generation latency decreases as \(P^{\sf sw}_{u,v}(n)\) increases.

\subsubsection{ISL Tracking Capability}
The admissible \emph{line-of-sight} (LoS)  angular-rate threshold \(\dot{\theta}_{\delta}\) in \eqref{eq:link_exist} characterizes the ISL tracking capability.
A larger \(\dot{\theta}_{\delta}\) allows ISLs to remain feasible over a longer time interval, thereby improving network connectivity.
Consequently, the affected satellites tend to achieve smaller expected path latencies, and thereby they host more frequently activated experts.
Therefore, stronger ISL tracking capability reduces token-generation latency by increasing link availability and shortening routing paths.



\begin{table*}[t]
\centering
\caption{Representative satellite platforms and their expert-hosting capacity under a SwitchTransformer MoE model.}
\label{tab:platform_expert_capacity}
\setlength{\tabcolsep}{3pt}      
\renewcommand{\arraystretch}{1.15} 

\begin{tabular}{cccc}
\toprule
\textbf{Satellite Platform} 
& \textbf{Memory/Compute Cap.} 
& \textbf{\# FP16/INT8 Experts}
& \textbf{Place. Scenarios} \\
\midrule
RAD5545 SpaceVPX SBC~\cite{bae_rad5545_sbc}
& 4 GB~/~3.7 GFLOPS
& 1~/~2
& Propagation-limited \\

Frontgrade SBC-2A72~\cite{FrontgradeSBC2A72}
& 8 GB~/ 10 GFLOPS
& 2~/~4
& Propagation-compute tradeoff \\

SpaceCloud iX10~\cite{unibap_ix10}
& 24 GB~/ $\leq$26 TOPS
& 6~/~12
& Compute-limited \\
\bottomrule
\end{tabular}
\arrayrulecolor{black}
\vspace{-3mm}
\end{table*}

\subsection{Multi-Expert Satellites}

The MoE placement strategies in the preceding sections are based on the assumption that single-satellite memory is sufficient for hosting only one expert subnetwork. Future satellite platforms with larger onboard memory and stronger AI accelerators may be able to cache and execute multiple experts, as shown in Table~\ref{tab:platform_expert_capacity}. In the sequel, we discuss how the strategies can be extended by relaxing the assumption. Consider the $\ell$-th subnet, and let $N_E$ denote the maximum number of experts that can be hosted on any candidate satellite. 
Let $\hat{\mathcal{S}}_\ell$ be the set of activated experts in layer $\ell$, and let $q_s(\hat{\mathcal{S}}_\ell)$ denote the number of activated experts executed on satellite $s$, satisfying $0\le q_s(\hat{\mathcal{S}}_\ell) \le N_E$. The resulting active satellite set is $\mathcal{S}_\ell^{\sf sat}=\{s|q_s(\hat{\mathcal{S}}_\ell)\ge 1\}$.
Compared with the one-expert-per-satellite setting, co-locating multiple activated experts introduces additional computation workload. Accordingly, the expected path latency in \eqref{eq:ave_satellite_path_latency} can
be generalized to the following effective latency:
\begin{equation}
\label{eq:multiE_eff_latency}
\tilde T_{\ell,s}(\hat{\mathcal{S}}_\ell)
\triangleq \bar T_{\ell,s}
+\frac{q_s(\hat{\mathcal{S}}_\ell)}{\eta_s}\,\bar T_{\sf ex}+  T_{\sf ga},
\end{equation}
where \(\bar T_{\ell,s}\) is the expected routing latency of satellite \(s\), \(\bar T_{\sf ex}\) and $T_{\sf ga}$ are the per-expert and gateway computation latencies, respectively, and \(\eta_s\ge 1\) characterizes the effective onboard parallelism, with a larger \(\eta_s\) indicating stronger parallel execution capability.
Note that \eqref{eq:multiE_eff_latency} reduces to the single-expert satellite when \(N_E=1\), and further reduces to the routing-only latency when computation is negligible, i.e., \(\eta_s\to\infty\).
Owing to per-layer token aggregation, the inference latency of layer \(\ell\) retains the same bottleneck structure as before, namely,
\begin{equation}
    T_{\sf max}(\hat{\mathcal{S}}_\ell)
=
\max_{s\in\mathcal{S}^{\sf{sat}}_\ell}
\tilde T_{\ell,s}(\hat{\mathcal{S}}_\ell),
\end{equation}
where \(\mathcal{V}^{\sf{ex}}_\ell\) denotes the candidate expert-satellite set.

To obtain further insight, we examine two regimes of
\eqref{eq:multiE_eff_latency} and their implications for expert placement.
In the \emph{propagation-limited} regime, the computing term in
\eqref{eq:multiE_eff_latency} is negligible, e.g.,
when \(\eta_s\) is sufficiently large, and the latency is dominated by routing.
In this case, the optimal placement preserves the activation--latency
monotonicity in Theorem~\ref{theorem:pop_place}: experts with larger
activation probabilities should be assigned to satellites with smaller
\(\bar T_{\ell,s}\).
For the multi-expert case (\(N_E\geq 2\)), this rule extends naturally by
treating each satellite as providing \(N_E\) identical expected
path-latency slots and filling the slots associated with the smallest
\(\bar T_{\ell,s}\) using the most frequently activated experts.
By contrast, in the \emph{computing-limited} regime, co-locating multiple
frequently activated experts increases \(q_s(\hat{\mathcal{S}}_\ell)\) and
enlarges the compute-related term in \eqref{eq:multiE_eff_latency}, which can
dominate the layer bottleneck.
In this regime, it is preferable to spread highly activated experts across
multiple low-latency satellites so as to reduce computation contention.
Therefore, the resulting design exhibits a fundamental propagation--computing
tradeoff: concentrating experts on the best satellites reduces the routing
latency \(\bar T_{\ell,s}\), but increases the contention term
\(\frac{q_s(\hat{\mathcal{S}}_\ell)}{\eta_s}\bar T_{\sf ex}\), whereas
dispersing experts has the opposite effect.
As indicated by Table~\ref{tab:platform_expert_capacity}, this tradeoff
becomes increasingly relevant for platforms capable of hosting multiple
experts.
Characterizing and optimizing this tradeoff is an important direction for
minimizing the E2E inference latency of SpaceMoE.

\section{Experimental Results}
\label{sec:Experiments}

\subsection{Experimental Setup}


\subsubsection{Space Network}
We consider a polar-orbiting LEO constellation with 33 orbital planes and 32 satellites per plane, with a phasing parameter of $F=13$, resulting in a total of 1056 satellites. This setup aligns with existing mega-constellations (e.g., SpaceX and OneWeb). The orbital altitude is 550 km, and the inclination is 87$^\circ$. The orbital dynamics are divided into 200 time slots. In each slot, considering high-speed laser ISLs (with an ISL rate of $\ge 100$ Gbps), the communication latency is dominated by  the propagation latency, which scales linearly with the distance between satellites, as described in \eqref{eq:propagation}. 
An ISL is available only when the angular rate is below \(0.12\) rad/s. Moreover, space-weather-induced link outages are modeled as independent Bernoulli events with a survival probability of \(0.95\), assumed identical across all links~\cite{kaymak2018survey}.
For onboard computing, each satellite is equipped with a radiation-tolerant single-board computer, such as the Frontgrade SBC-2A72 VPX (SpaceVPX 3U)~\cite{FrontgradeSBC2A72}, which provides up to \(10.4\)~GFLOPS of peak performance.
To avoid overheating and meet stringent orbital energy constraints, we consider a utilization rate of \(70\%\), resulting in an effective compute throughput of \(7.28\)~GFLOPS.

\subsubsection{MoE Configuration}
In SpaceMoE, the sparse MoE language model LLaMA-MoE-3.5B is deployed with approximately \(3.5\) billion active parameters out of a total of \(6.7\) billion parameters~\cite{zhu2024llama}.
The model contains \(32\) MoE layers, each with \(8\) experts, and adopts a Top-\(2\) activation strategy.
Its inference cost is \(36.3\)~TFLOPs for a single forward pass with a sequence length of \(4096\)~\cite{zhu2024llama}.
To evaluate the model under representative inference workloads, we use the \emph{lm-evaluation-harness} framework on eight standard English reasoning and question-answering datasets~\cite{lmevalharness}.
For each question instance, the MoE inference is executed on a topology snapshot randomly sampled from the \(200\) time slots.
The E2E performance is then measured by averaging the token-generation latencies over all sampled network-topology realizations.

\subsubsection{Benchmarking Schemes}
We benchmark the performance of SpaceMoE against that of the following schemes.

\begin{itemize}
    \item \textit{Random Placement (RandPlace):} The \(256\) experts and \(32\) gateways in the considered MoE model are \emph{randomly} assigned to the \(1056\) satellites, with each satellite hosting either an expert or a gateway.

     \item \textit{Random Intra-layer Placement (RandIntra):}
The satellite constellation is partitioned into \(L\) subnets along the ring direction, where each subnet is associated with one MoE layer, as described in Sec.~\ref{sec:ring_align}.
Within each subnet, the gateway and expert satellites are randomly assigned.
This baseline exploits layer-wise subnet decomposition to improve communication locality between the gateway and experts of the same layer.

 \item \textit{Random Intra-layer Placement with Central Gateway (RandIntra-CG):}
Building on random intra-layer placement, this benchmark further places the gateway of each MoE layer at the center of its associated subnet, as described in Sec.~\ref{sec:gateway_place}.
In contrast to SpaceMoE, the expert satellites are still randomly assigned within each subnet, so the placement remains independent of the expert activation distribution and path latency.
\end{itemize}

\begin{table*}[t]
\centering
\caption{Token-generation latency (in seconds per token) of different baselines of LLaMA-MoE-3.5B model on various datasets.}
\label{tab:LLaMA}
\setlength{\tabcolsep}{3pt}      
\renewcommand{\arraystretch}{1.15} 

\begin{tabular}{l *{8}{c}} 
\toprule
\textbf{Baselines} & OpenBookQA & PIQA & ARC-E & ARC-C  & WinoGrande & BoolQ & SciQ & HellaSwag\\
\midrule
RandPlace                  & 5.30 & 5.29 & 5.28 & 5.28  & 5.30 & 5.29 & 5.29 & 5.28\\
RandIntra               & 4.14 & 4.16 & 4.14 & 4.14  & 4.16 & 4.13 & 4.14 & 4.14\\
RandIntra-CG              & 3.34 & 3.37 & 3.35 & 3.35  & 3.34 & 3.35 & 3.35 & 3.35\\
\midrule
\textbf{SpaceMoE}            & 1.02 & 1.04 & 1.03 & 1.04 
 &1.04  & 1.07 & 1.07 & 1.06\\
\bottomrule
\end{tabular}
\end{table*}

\begin{figure}[t]
  \centering
  \begin{subfigure}{0.4\textwidth}
\includegraphics[width=\linewidth]{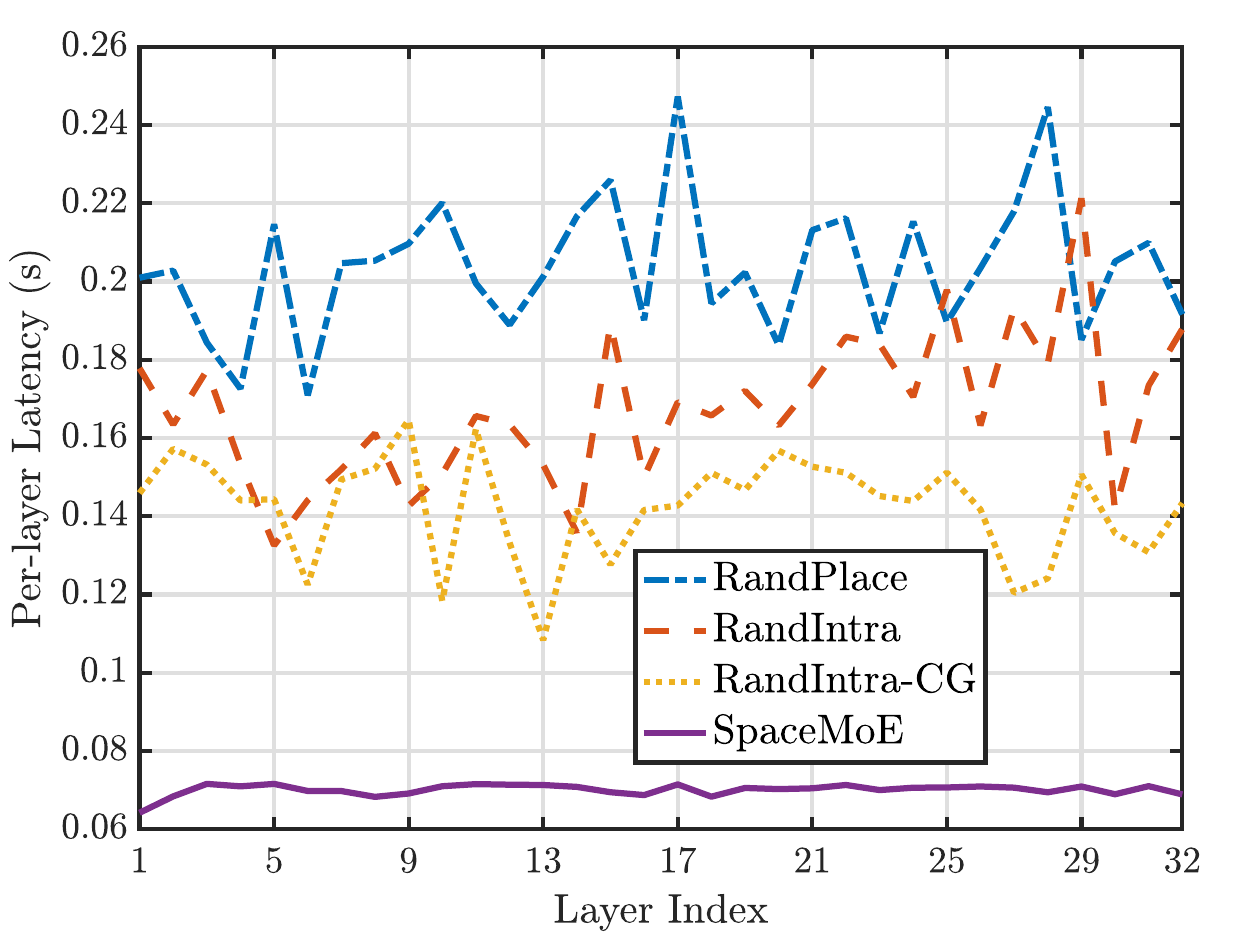}
    \caption{Per-layer inference latency}
    \label{fig:index_latency}
  \end{subfigure}
  \begin{subfigure}{0.4\textwidth}
\includegraphics[width=\linewidth]{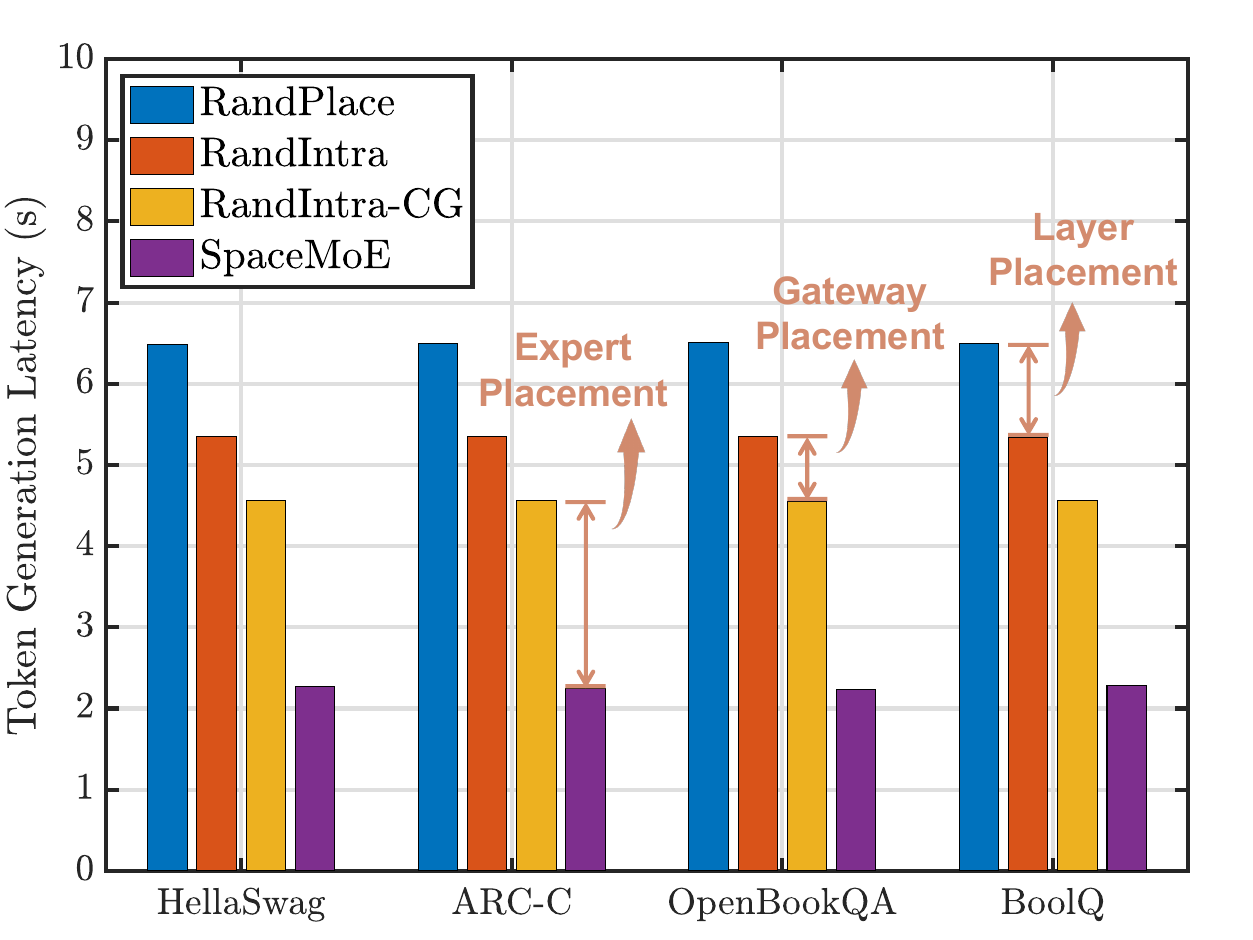}
    \caption{E2E token-generation latency}
    \label{fig:dataset_placement}
  \end{subfigure}
  \caption{Performance comparisons with benchmarking schemes. }
  \label{fig:placement_peformance}
  \vspace{-3mm}
\end{figure}


\subsection{SpaceMoE Performance}

This subsection evaluates SpaceMoE on eight datasets and compares it with the three benchmark schemes shown in Fig.~\ref{fig:placement_peformance} and Table~\ref{tab:LLaMA}.
Fig.~\ref{fig:placement_peformance}(\subref{fig:index_latency}) shows the per-layer inference latency for different baselines. 
The proposed SpaceMoE achieves both the smallest layer-wise latency and the lowest variance across layers.
This gain comes from the layer placement in Sec. \ref{sec:ring_align} and activation-aware expert placement in Theorem~\ref{theorem:pop_place}, which assigns more frequently activated experts to satellites with smaller expected path latencies to the gateway, thereby reducing the bottleneck delay of each layer.

Fig.~\ref{fig:placement_peformance}(\subref{fig:dataset_placement})
presents the performance comparison in terms of E2E token-generation latency, measured as the sum of layer-wise inference latencies.
The token-generation latency is reduced when moving from RandPlace to RandIntra. This reduction reflects the benefit of the layer placement in Sec.~\ref{sec:ring_align}. In particular, the subnet decomposition places the gateway and the experts that frequently exchange tokens (i.e., those within the same layer) into geographically proximate satellite subsets, which shortens the token-routing paths. Moreover, the ring-aligned subnets make the first and last layers adjacent, further reducing the routing latency from the last-layer experts to the first-layer gateway.
The latency reduction from RandIntra to RandIntra-CG is due to the center-oriented gateway placement within each subnet.
It is seen that SpaceMoE further achieves at least a twofold latency reduction over the benchmark RandIntra-CG. 
This gain is attributed to the optimal expert placement derived in Theorem~\ref{theorem:pop_place} and also validates the layer computation latency in \eqref{eq:surro_expected_latency_1} as an accurate approximation.
These ablation studies over subnet decomposition, layer and expert placement demonstrate the superiority of SpaceMoE.
The advantage of SpaceMoE is further confirmed by Table~\ref{tab:LLaMA}, which reports the token-generation latency on  eight datasets.
Across all tasks, the proposed approach consistently outperforms the three baselines and achieves at least a threefold latency reduction.
This improvement reflects the combined gains of ring-based subnet decomposition, gateway placement, and activation-aware expert placement.

\begin{figure*}[t]
  \centering
  \begin{subfigure}{0.4\textwidth}
    \centering
    \includegraphics[width=\linewidth]{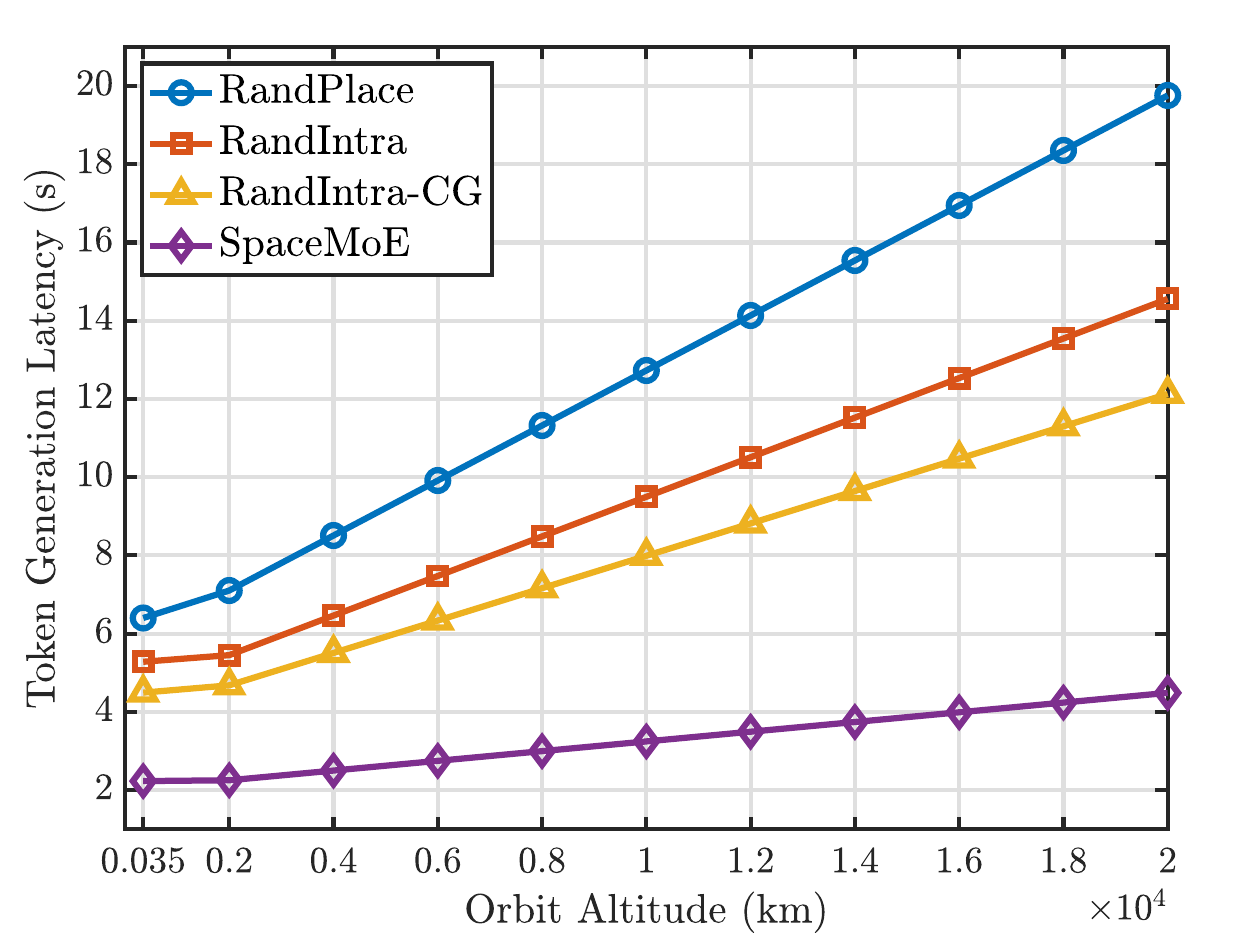}
    \caption{Orbital altitude}
    \label{fig:placement_a}
  \end{subfigure}
  \begin{subfigure}{0.4\textwidth}
    \centering
    \includegraphics[width=\linewidth]{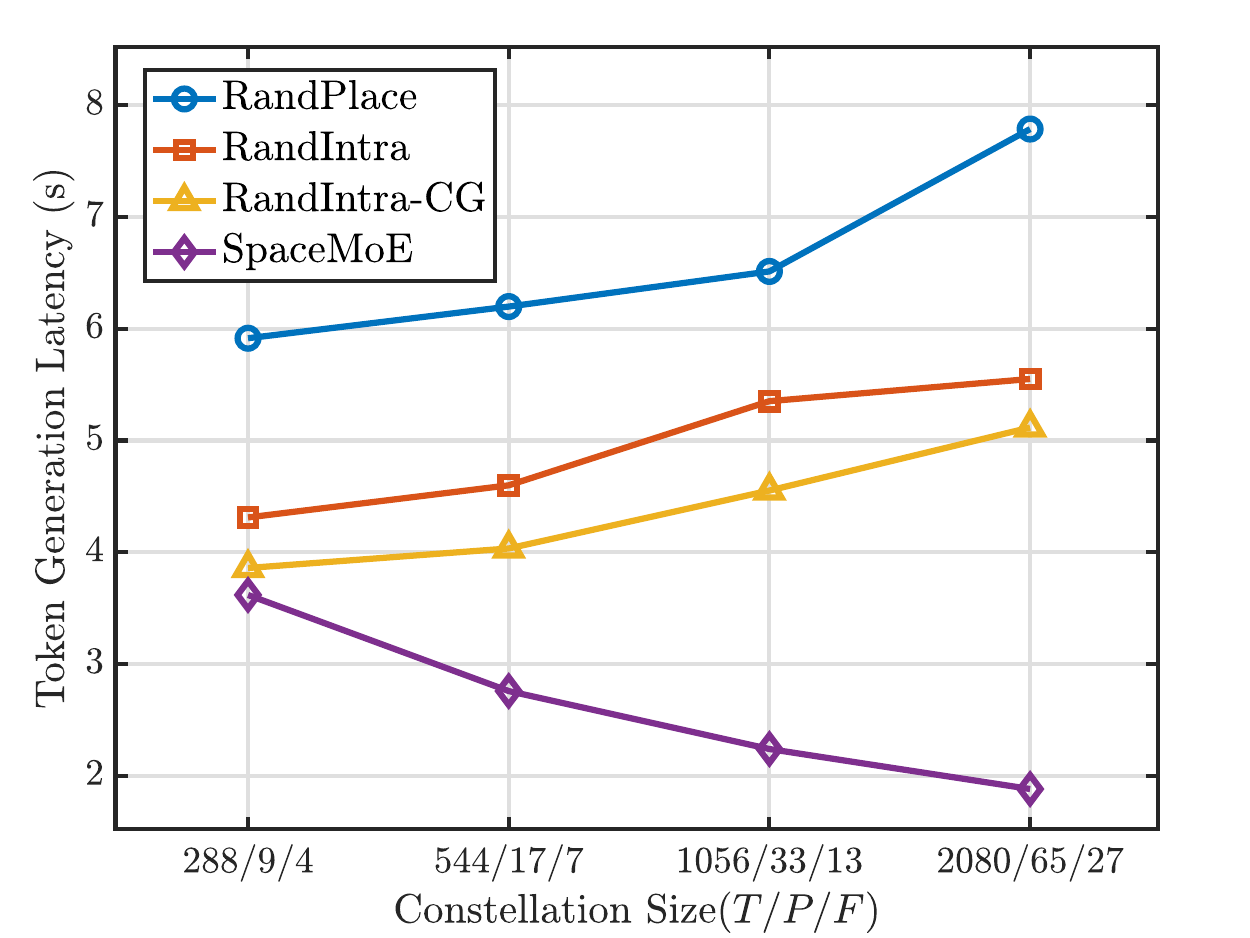}
    \caption{Satellite constellation size}
    \label{fig:placement_b}
  \end{subfigure}
  \begin{subfigure}{0.4\textwidth}
    \centering
    \includegraphics[width=\linewidth]{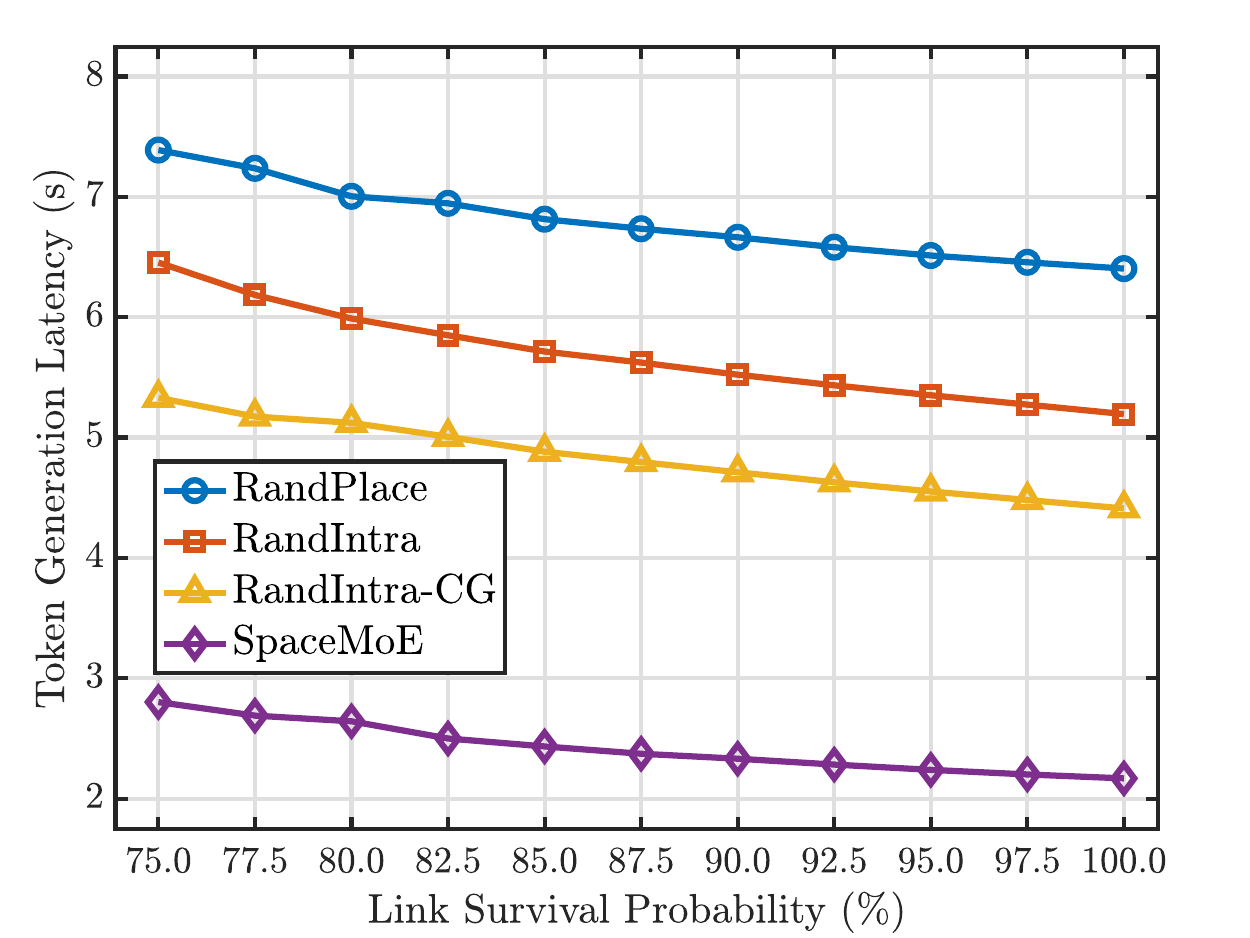}
    \caption{Space weather-induced link outage}
    \label{fig:placement_c}
  \end{subfigure}
  \begin{subfigure}{0.4\textwidth}
    \centering
    \includegraphics[width=\linewidth]{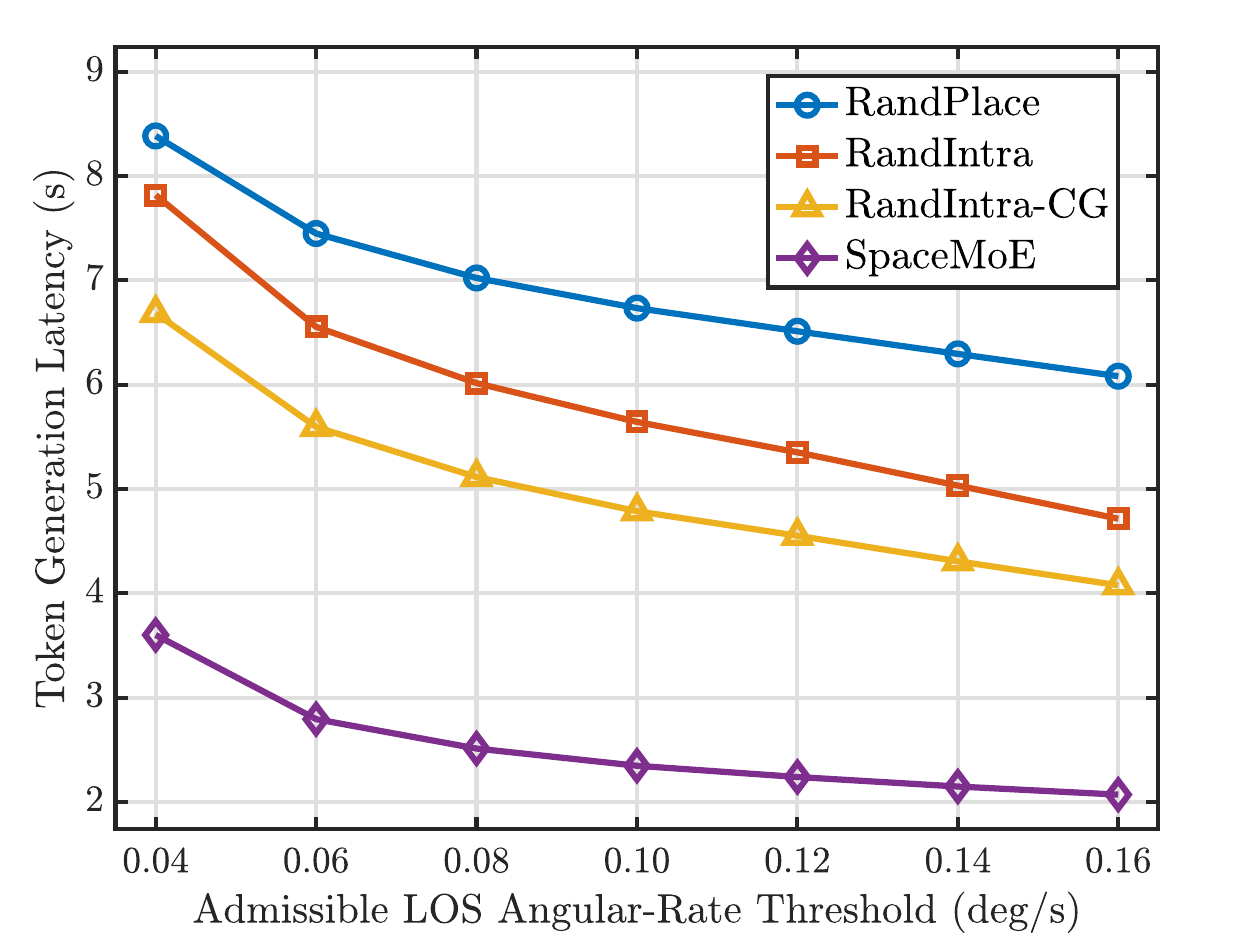}
    \caption{ISL tracking capability}
    \label{fig:placement_d}
  \end{subfigure}
  \caption{Effects of network parameters on E2E latency.
    }
  \label{fig:placement_para}
\end{figure*}

\subsection{Effects of Network Parameters}

This subsection examines how network parameters affect the performance of SpaceMoE and the baselines, as shown in Fig.~\ref{fig:placement_para}. Specifically, Fig.~\ref{fig:placement_para}(\subref{fig:placement_a}) shows that, under all placement policies, token-generation latency increases monotonically with orbital altitude. This trend is due to the longer inter-satellite distances and the resulting increase in propagation delay.
Fig.~\ref{fig:placement_para}(\subref{fig:placement_b}) illustrates how token-generation latency varies with constellation size.
As the number of satellites increases, SpaceMoE achieves lower latency, whereas the three baselines exhibit the opposite trend.
This is because a larger constellation provides SpaceMoE with a richer set of candidate satellites, enabling the placement of highly activated experts on satellites with smaller expected path latencies to the gateway.
By contrast, the baselines assign experts randomly over the enlarged constellation, which increases the chance of routing to distant satellites and therefore enlarges the worst-case layer latency.
As shown in Figs.~\ref{fig:placement_para}(\subref{fig:placement_c}) and (\subref{fig:placement_d}), SpaceMoE consistently outperforms the baselines, demonstrating its stability to link outages and variations in ISL tracking capability.
Moreover, Fig.~\ref{fig:placement_para}(\subref{fig:placement_c}) shows that token-generation latency decreases monotonically with the link survival probability, indicating that more available ISLs reduce token-routing latency. Fig.~\ref{fig:placement_para}(\subref{fig:placement_d}) further examines the impact of the admissible LoS angular-rate threshold, which captures ISL tracking capability. A larger threshold allows more ISLs to be established, thereby further reducing token-generation latency.

\section{Concluding Remarks}
\label{sec:conclusion}

In this work, we have studied the fundamental placement problem for SpaceMoE, namely, how to optimize the expert-to-satellite mapping to minimize MoE inference latency. 
The study uncovers a general principle for distributed AI deployment in space: the network and AI model topologies need to be reconciled in order to rein in the E2E communication-and-computation latency for token generation.
This requires careful partitioning of the model and mapping of model parts onto satellites by jointly considering both model and network routing in the inference process and heterogeneous popularity of expert sub-models.
While we consider polar-orbit LEO constellations, the current design principle and approach can be naturally extended to  diversified network topologies such as 
Walker, rosette-type, and GEO-LEO hierarchical satellite networks. 
From a broader perspective, the problem formulation also highlights several open challenges in building space AI infrastructure. 
Besides the extensions discussed in Sec.~\ref{sec:extensions}, another important direction is to develop link-state-aware token-routing strategies that remain robust to unpredictable satellite failures and link disruptions.

\section{Appendix}

\subsection{Proof of Lemma \ref{lemma:surrogate_expansion}}
\label{proof_lemma:surrogate_expansion}

\begin{proof}
We define $\Delta \bar{\tau}_s \triangleq \bar{\tau}_s-\bar{\tau}_{s-1}\ge 0,
\bar{\tau}_0\triangleq 0.$
Since \(\bar{\tau}_s=\sum_{j=1}^{s}\Delta \bar{\tau}_j\), we have
\begin{align}
\bar{\tau}_{\sf c}(\mathbf X)
&=\sum_{s=K}^{I}\Pr\!\bigl(R_{\mathbf X}=s\bigr)\bar{\tau}_s \notag\\
&=\sum_{s=K}^{I}\Pr\!\bigl(R_{\mathbf X}=s\bigr)\sum_{j=1}^{s}\Delta \bar{\tau}_j \notag\\
&=\sum_{j=1}^{I}\sum_{s=\max\{K,j\}}^{I}\Pr\!\bigl(R_{\mathbf X}=s\bigr)\Delta \bar{\tau}_j \notag\\
&=\sum_{j=1}^{I}\Pr\!\bigl(R_{\mathbf X}\ge j\bigr)\Delta \bar{\tau}_j \notag\\
&=\sum_{j=1}^{I}\Bigl(1-\Pr\!\bigl(R_{\mathbf X}<j\bigr)\Bigr)\Delta \bar{\tau}_j,
\end{align}
which is exactly \eqref{eq:expaned_surrogate}.
Since \(\Delta \bar{\tau}_j\ge 0\) for all \(j\), \(\bar{\tau}_{\sf c}(\mathbf X)\) is monotonically decreasing in the CDF \(\Pr(R_{\mathbf X}<j)\).
\end{proof}

\vspace{-3mm}
\subsection{Proof of Lemma \ref{Lemma:CDF}}
\label{Proof_Lemma:CDF}

\begin{proof}
For any \(s\in\{K+1,\dots,I\}\), the event \(\{R_{\mathbf X}<s\}\) means that all \(K\) activated experts are placed within the first \(s-1\) latency ranks. Equivalently, the activated experts occupy some subset \(\mathcal U\subseteq \{1,\dots,s-1\}\) with \(|\mathcal U|=K\).
Under the PPSWOR model in \eqref{eq:PPSWOR-set}, the importance weight assigned to latency rank \(j\) is \(\tilde{\omega}_j(\mathbf X)\), as defined in \eqref{eq:reordered_weight}. Therefore,
\begin{align}
\Pr\!\bigl(R_{\mathbf X}<s\bigr)
&=
\sum_{\substack{\mathcal U\subseteq\{1,\dots,s-1\}\\|\mathcal U|=K}}
\Pr\!\bigl(\hat{\mathcal S}\text{ occupies ranks }\mathcal U\bigr) \notag\\
&=
\sum_{\substack{\mathcal U\subseteq\{1,\dots,s-1\}\\|\mathcal U|=K}}
\frac{\prod_{j\in\mathcal U}\tilde{\omega}_j(\mathbf X)}
{e_K(\omega_1,\dots,\omega_I)} \notag\\
&=
\frac{
\sum_{\substack{\mathcal U\subseteq\{1,\dots,s-1\}\\|\mathcal U|=K}}
\prod_{j\in\mathcal U}\tilde{\omega}_j(\mathbf X)
}{
e_K(\omega_1,\dots,\omega_I)
} \notag\\
&=
\frac{
e_K\bigl(\tilde{\omega}_1(\mathbf X),\dots,\tilde{\omega}_{s-1}(\mathbf X)\bigr)
}{
e_K(\omega_1,\dots,\omega_I)
}.
\end{align}
This completes the proof.
\end{proof}

\subsection{Proof of Theorem \ref{theorem:pop_place}}
\label{proof_pop_place}
\begin{proof}
Consider any placement \(\mathbf X\) that contains an adjacent inversion, i.e., $\tilde{\omega}_a(\mathbf X)<\tilde{\omega}_{a+1}(\mathbf X)$
for some \(a\in\{1,\dots,I-1\}\). Let \(\mathbf X'\) be obtained by swapping the two experts assigned to ranks \(a\) and \(a+1\).
From Lemma~\ref{Lemma:CDF}, \(\Pr(R_{\mathbf X}<s)\) depends on \(\mathbf X\) only through the first \(s-1\) ranked weights in
\begin{equation}
    e_K\big(\tilde{\omega}_1(\mathbf X),\dots,\tilde{\omega}_{s-1}(\mathbf X)\big).
\end{equation}
Hence, the swap leaves \(\Pr(R_{\mathbf X}<s)\) unchanged for all \(s\neq a+1\): for \(s\le a\), the prefix does not include rank \(a\); for \(s\ge a+2\), the prefix contains both swapped weights, so the multiset of weights is unchanged.

For \(s=a+1\), the prefix replaces \(\tilde{\omega}_a(\mathbf X)\) by the larger value \(\tilde{\omega}_{a+1}(\mathbf X)\). Since \(e_K(\cdot)\) is coordinate-wise increasing on the nonnegative orthant,
\begin{equation}
    \Pr\!\big(R_{\mathbf X'}<a+1\big)\ge \Pr\!\big(R_{\mathbf X}<a+1\big).
\end{equation}
The inequality is strict whenever \(a\ge K\) and \(\tilde{\omega}_{a+1}(\mathbf X)>\tilde{\omega}_a(\mathbf X)\).
Applying Lemma~\ref{lemma:surrogate_expansion} then gives
$\bar{\tau}_c(\mathbf X')\le \bar{\tau}_c(\mathbf X).$
Therefore, any adjacent inversion can be removed without increasing the objective. Repeating this exchange step yields an optimal reduced placement satisfying
\begin{equation}
    \tilde{\omega}_1(\mathbf X^\star)\ge \tilde{\omega}_2(\mathbf X^\star)\ge \cdots \ge \tilde{\omega}_I(\mathbf X^\star).
\end{equation}
That is, experts with larger importance weights are assigned to smaller expected path latencies. Since the expert activation probability \(P_i=\Pr(i\in\hat{\mathcal S})\) is a monotone increasing function of \(\omega_i\) by \eqref{eq:prob_weight_relation}, the same ordering holds for activation probabilities.
This completes the proof.
\end{proof}

\bibliography{Ref}
\bibliographystyle{IEEEtran}

\end{document}